\documentclass[12pt,onecolumn]{IEEEtran}
\usepackage[utf8]{inputenc}
\usepackage{amsthm}
\usepackage{amsmath}
\usepackage{amsfonts}
\usepackage{graphicx}
\usepackage{latexsym}
\usepackage{amssymb}
\usepackage{stmaryrd}
\usepackage{amscd}
\usepackage[small]{caption}
\usepackage{xcolor}
\usepackage{enumitem}

\textheight      9.82in
\textwidth       7.18in

\oddsidemargin   -0.3in
\evensidemargin  0.00in

\topmargin      -0.83in
\topskip         0.00in
\footskip        0.40in

\newcommand{\eqdef}{\stackrel{\triangle}{=}}

\newcommand{\sbinom}[2]{\left[ \begin{array}{c} #1 \\ #2 \end{array} \right] }

\newcommand{\field}[1]{\mathbb{#1}}

\newcommand{\F}{\field{F}}

\newcommand{\cA}{{\cal A}}
\newcommand{\cB}{{\cal B}}

\newcommand{\cL}{{\cal L}}
\newcommand{\cS}{{\cal S}}

\newcommand{\cR}{{\cal R}}

\DeclareMathAlphabet{\mathbfsl}{OT1}{cmr}{bx}{it}
\newcommand{\uuu}{\kern-1pt\mathbfsl{u}\kern-0.5pt}
\newcommand{\vvv}{\kern-1pt\mathbfsl{v}\kern-0.5pt}

\newcommand{\myboxplus}{\kern1pt\mbox{\small$\boxplus$}}

\makeatletter \DeclareRobustCommand{\sbinom}{\genfrac[]\z@{}}
\makeatother
\newcommand{\G}[2]{\sbinom{{#1}\kern-1pt}{{#2}\kern-1pt}}
\newcommand{\Gq}[2]{\sbinom{{#1}\kern-0.25pt}{{#2}\kern-0.25pt}}

\newcommand{\bfzero}{{\boldsymbol 0}}
\newcommand{\bfone}{{\boldsymbol 1}}

\newcommand{\bfE}{{\boldsymbol E}}

\newcommand{\bfs}{{\boldsymbol s}}
\newcommand{\bfr}{{\boldsymbol r}}

\newcommand{\be}[1]{\begin{equation}\label{#1}}
\newcommand{\ee}{\end{equation}}

\newcommand{\Cref}[1]{Co\-rol\-la\-ry\,\ref{#1}}

\newtheorem{theorem}{Theorem}
\newtheorem{lemma}{Lemma}

\newtheorem{corollary}[theorem]{Corollary}

\begin{document}

\title{Efficient Algorithm for the Linear Complexity of Sequences and Some Related Consequences}

\author{\textbf{Yeow Meng Chee$^\text{x}$}, \textbf{Johan Chrisnata$^{*,+}$, \textbf{Tuvi Etzion$^*$}, \textbf{Han Mao Kiah$^\text{+}$}}\\
{\small $^\text{x}$Department of Industrial Systems Engineering and Management, National University of Singapore, Singapore}\\
{\small $^*$Computer Science Department, Technion, Israel Institute of Technology, Haifa 3200003, Israel}\\
{\small $^+$School of Physical and Mathematical Sciences, Nanyang Technological University, Singapore}\\
{\small {\it pvocym@nus.edu.sg}, {\it johan.c@cs.technion.ac.il}, {\it etzion@cs.technion.ac.il}, {\it hmkiah@ntu.edu.sg}\vspace{-0.13ex}}
}


\maketitle
\begin{abstract}
The linear complexity of a sequence $\bfs$ is one of the measures
of its predictability. It represents the smallest degree of a linear
recursion which the sequence satisfies. There are several algorithms to find the linear
complexity of a periodic sequence $\bfs$ of length $N$ (where $N$ is of some given form)
over a finite field $\F_q$ in $O(N)$ symbol field operations.
The first such algorithm is The Games-Chan Algorithm which considers binary sequences of period~$2^n$, and is
known for its extreme simplicity.
We generalize this algorithm and apply it efficiently for several families of binary sequences.
Our algorithm is very simple, it requires~$\beta N$ bit operations for a small constant $\beta$, where
$N$ is the period of the sequence. We make an analysis on the number of bit operations
required by the algorithm and compare it with previous algorithms.
In the process, the algorithm also finds the recursion for
the shortest linear feedback shift-register which generates the sequence.
Some other interesting properties related to shift-register sequences,
which might not be too surprising but generally unnoted, are
also consequences of our exposition.
\end{abstract}


\section{Introduction}

Binary sequences with good pseudorandomness and
complexity properties are widely used as keystreams in
cryptographic applications~\cite{MVV96,Rue86}. Among the measures
commonly used to measure the complexity of a sequence $\bfs$ is its
linear complexity $c(\bfs)$, defined to be the length of the shortest
linear feedback shift-register that generates $\bfs$. Sequences of low
linear complexity are fully determined via a solution of
$c(\bfs)$ linear equations if $2c(\bfs)$ consecutive terms of the sequence
are known. Hence, high linear complexity is a prerequisite for cryptographic applications,
and during the last fifty years there has been an extensive research to find the linear complexity of
sequences.

The linear complexity of a sequence $\bfs$ of length~$N$ over a finite field $\F_q$ can be determined
with the well-known Berlekamp-Massey algorithm~\cite{Ber68,Mas69} in $O(N^2)$ symbol field operations.
This algorithm was implemented during the years in various ways, e.g.~\cite{Fit95,SKHN}.
The complexity of this algorithm was improved to $O( N (\log N)^2 \log \log N)$ in~\cite{Bla97,Bla83,Bla85}.
All these algorithms considered sequences of any length over any finite field $\F_q$. But, in many applications,
only periodic sequences are considered and hence the algorithm to find the linear complexity
of such sequences, can be considerably improved.

Games and Chan~\cite{GaCh83} proposed a simple algorithm which finds the linear complexity $c$ of any
sequence $\bfs$ with period $N=2^n$. Implementation of their algorithm requires $N$ bit operations on the sequence $\bfs$
and another $n$ bit operations reduced from integer operations to compute $c$. In the following two decades
a few algorithms were suggested to generalize this algorithm to binary sequences with other periods
and periodic sequences over $\F_q$. The complexity of these algorithms for sequences
with period $N$ were kept as low as $\beta N$ for some constant $\beta$, but relatively much higher than the $N+\log N$ bit operations
required for The Games-Chan Algorithm. The generalization for sequence with period $p^n$ over $\F_{p^t}$
was given in~\cite{Din91,ImMo93}.
Xiao et al.~\cite{WXC02,XWLI} gave an algorithm to compute the linear
complexity of sequence with period $N \in \{ p^t, 2p^t\}$ over $\F_q$, when $q$ is a primitive root modulo~$p^2$.
Chen~\cite{Che05} gave an algorithm for sequences over $\F_{p^t}$ with period $\ell \cdot 2^n$,
where $2^n | p^t -1$ and $g.c.d. (\ell , p^t-1)=1$.
Chen~\cite{Che06} generalized this algorithm to determine the linear complexity
of sequences with period $\ell \cdot n$ over~$\F_{p^t}$, where $\ell | p^t - 1$
and $g.c.d.(n, p^t - 1) = 1$. The main idea in~\cite{Che06} is to reduce the calculation
of the sequence with period $\ell \cdot n$ over $\F_{p^t}$ to the calculation
of the linear complexity of $\ell$ sequences with period $n$ over $\F_{p^t}$.
The algorithms in~\cite{Che05,Che06,WXC02,XWLI} are designed for sequences over a field of odd order.
The ideas in~\cite{Che05,Che06} are generalized in~\cite{Mei08} for binary sequences.
The~\cite{Mei08} Meidl presents the most efficient algorithm for computing the linear complexities of
binary sequences of period $N=\ell \cdot 2^n$. To apply Meidl's algorithm on a sequence $\bfs$, one forms
a family of sequences of length $2^n$ from $\bfs$ and apply  The Games-Chan Algorithm to
each of these sequences. Then for specific values of $\ell$, Meidl showed that the algorithm
requires $\beta N$, where $\beta$ is a small constant. The algorithm in~\cite{Mei08}
is of interest for large $N$, a small odd integer $\ell$ such that the smallest $k$ for which $\ell$ divides $2^k - 1$ is not large.

In this paper we consider as in~\cite{Mei08} only the most important class of sequences, namely the binary sequences.
We present an algorithm which is very similar in nature to The Games-Chan Algorithm
and can be viewed as its direct generalization compared to previous algorithms, but it does
not reduce most of the computations to The Games-Chan Algorithm. Moreover, the algorithm can handle
efficiently sequences of even period as~\cite{Mei08} and also binary sequences of some odd periods,
which are not considered by previous algorithms.
The algorithm is always efficient for large $N$, but also small $N$, depending
on tradeoff between $\ell$ and $n$.
Moreover, the polynomial which generates the sequence is also computed in the algorithm,
a feature that does not exist in the algorithm of~\cite{Mei08}.
The algorithm can also compute the linear complexity of binary sequences of length
$p_1^{n_1} p_2^{n_2} ~ \cdots ~ p_t^{n_t}$, where the $p_i$'s are primes, the $n_i$'s are
positive integers, and the polynomial $\sum_{i=0}^{p_j-1} x^{i {p_j}^m}$ is primitive, for $0 \leq m \leq n_j-1$.
The algorithm requires~$\beta N$ bit operations to compute the linear complexity of a binary sequence $\bfs$ of length $N$,
where the constant $\beta$ is relatively small. Furthermore, our exposition presents also interesting,
even if not surprising, results on properties of sequences and their linear complexities.

The rest of this paper is organized as follows. In Section~\ref{sec:prelim}, we present the necessary definitions
for sequences, linear complexity, the related polynomials, and finally
The Games-Chan Algorithm is discussed. In Section~\ref{sec:general}, our general method and its algorithm is presented.
In Section~\ref{sec:power_primitive}, we consider a generalization of The Games-Chan Algorithm,
for sequences generated by a power of a primitive polynomial. In Section~\ref{sec:3power_m_2power_n}, we present
the main idea of our algorithm on sequences with period $p \cdot 2^n$, where $p$ is a prime.
The cases which will be discussed are $p=3$ and $p \equiv 1 (\text{mod}~4)$, where 2 is
a generator modulo $p$.
We compare the complexity of
our algorithm to the algorithm presented in~\cite{Mei08}. A generalization for odd period is presented in
Section~\ref{sec:odd_period}. This paper includes appendices which explain some claims
in the version for the International Symposium of Information Theory 2019.
All the computations are left for the full version of this paper.

\section{Preliminaries}
\label{sec:prelim}

Let $\bfs = \{ s_i \}_{i \geq 0}$ be an infinite binary sequence.
The sequence has a \emph{period} $N$, if $N$ is the least positive integer such that $s_i = s_{i+N}$
for each $i \geq 0$. Such a sequence is considered as a \emph{cyclic} sequence and is
be denoted by $[s_0,s_1,\ldots,s_{N-1}]$. Any periodic sequence satisfies a linear recursion
\begin{equation}
\label{eq:lin_rec}
s_{i+m} = a_1 s_{i+m-1} + \cdots + a_{m-1} s_{i+1} + a_m s_i ~, ~  i \geq 0,
\end{equation}
of order $m \leq N$, where $a_j \in \{0,1\}$ and all computations are done modulo 2.
The \emph{linear complexity} $c (\bfs)$ of $\bfs$ is defined as the least~$m$ for which~(\ref{eq:lin_rec}) holds.
Clearly ${c(\bfs) \leq N}$, for any sequence with period $N$, since $s_{i+N}=s_i$ by definition.
In terms of the \emph{shift operator}~$\bfE$, defined as $\bfE s_i = s_{i+1}$ or
$\bfE [s_0,s_1, \ldots,s_{N-1}]=[s_1,\ldots,s_{N-1},s_0]$, the linear recursion~(\ref{eq:lin_rec})
takes the form

\begin{equation}
\label{eq:lin_com}
\left(  \bfE^m + \sum_{j=1}^m a_j \bfE^{m-j} \right) s_i = 0~, ~~~~  i \geq 0.
\end{equation}
Since $m$ is the smallest such integer, it implies that ${a_m \neq 0}$.
If we define
$$
f(\bfE) \triangleq   \bfE^m + \sum_{j=1}^m a_j \bfE^{m-j} ~,
$$
then we have $f(\bfE)\bfs_i =0$ for each $i \geq 0$. Replacing the shift operator $\bfE$ by the variable $x$, we have that
$f(x) = x^m + \sum_{j=1}^m a_j x^{m-j}$, where $a_m \neq 0$. Let $c_{m-j} = a_j$
for $1 \leq j \leq m$, and hence $f(x) = x^m + \sum_{j=1}^m c_{m-j} x^{m-j}= x^m + \sum_{j=0}^{m-1} c_j x^j$,
where $c_0 \neq 0$. This implies that~(\ref{eq:lin_com}) takes the form
$$
f(\bfE)s_i = \left(  \bfE^m + \sum_{j=0}^{m-1} c_j \bfE^j \right) s_i = 0~, ~~~~  i \geq 0.
$$
If $f(x) = x^m + \sum_{j=0}^{m-1} c_j x^j$ and the sequence $\bfs = [s_0,s_1,\ldots,s_{N-1}]$ satisfies
$$
s_{i+m} = c_{m-1} s_{i+m-1} + c_{m-2} s_{i+m-2} + \cdots + c_1 s_{i+1} + c_0 s_i ~,
$$
for each $i \geq 0$, then we say that the polynomial $f(x)$ \emph{generates the sequence} $\bfs$
(or $\bfs$ is generated by $f(x)$).
It implies the following observation.
\begin{lemma}
If for $\bfs = [s_0,s_1,\ldots,s_{N-1}]$ the linear recurrence of the least degree $m$ satisfies
$$
s_{i+m} = c_{m-1} s_{i+m-1} + c_{m-2} s_{i+m-2} + \cdots + c_1 s_{i+1} + c_0 s_i ~,
$$
for each $i \geq 0$, then
$$
\left(  \bfE^m + \sum_{j=0}^{m-1} c_j \bfE^j \right) s_i = 0~,
$$
for each $i \geq 0$.
\end{lemma}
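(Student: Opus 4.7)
The plan is to verify the lemma by direct translation from the scalar recurrence into operator form, using only the definition of the shift operator $\bfE$ and the fact that all arithmetic is performed modulo $2$. There is essentially no hidden content in the statement: it simply rewrites the hypothesis of the lemma in a more compact notation that will be convenient for later sections of the paper. Accordingly, the proof should be short and almost mechanical.

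Concretely, I would first recall that by definition $\bfE s_i = s_{i+1}$, and hence by a trivial induction on $j$ one obtains $\bfE^j s_i = s_{i+j}$ for every $j \geq 0$ and every $i \geq 0$. Applying this identity to each power of $\bfE$ appearing in the displayed operator expression yields
\begin{equation*}
\left( \bfE^m + \sum_{j=0}^{m-1} c_j \bfE^j \right) s_i \;=\; s_{i+m} + \sum_{j=0}^{m-1} c_j\, s_{i+j}.
\end{equation*}
By the hypothesis of the lemma, the assumed recurrence $s_{i+m} = c_{m-1} s_{i+m-1} + \cdots + c_0 s_i$ holds for every $i \geq 0$. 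Moving all terms to one side and using that in $\F_2$ addition and subtraction coincide, the right-hand side above equals $0$ for every $i \geq 0$.

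The only step deserving any comment is the equivalence between moving terms across the equality and summing them on the same side, which is automatic here because we are working over the binary field. There is no genuine obstacle: the entire content of the lemma is the notational observation that the recurrence on sequence entries becomes the identity $f(\bfE) s_i = 0$ with $f(x) = x^m + \sum_{j=0}^{m-1} c_j x^j$ the associated characteristic polynomial. I would therefore keep the write-up to a few lines, emphasizing that this lemma is used later only to pass freely between the sequence-level recurrence and its operator/polynomial formulation.
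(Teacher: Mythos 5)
Your proof is correct and matches the paper's intent: the paper states this lemma as an immediate ``observation'' following its derivation of the operator form $f(\bfE)s_i=0$ from the recurrence, which is precisely your direct translation via $\bfE^j s_i = s_{i+j}$ and the fact that addition and subtraction coincide over $\F_2$. No gap; your write-up is essentially the same argument made explicit.
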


Henceforth, we denote by $\bfzero$ ($\bfone$, respectively) any sequence of any length which contains only zeroes
(ones, respectively).
\begin{corollary}
\label{cor:gen_zero}
For any nonzero polynomial $f(x)$ and any cyclic nonzero sequence $\bfs$, ${f(\bfE)\bfs =\bfzero}$, if and
only if the sequence $\bfs$ is generated by $f(x)$.
\end{corollary}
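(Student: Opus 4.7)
The plan is to prove both implications by directly unpacking the paper's definition of \emph{generated by} and comparing it with the coordinate-wise meaning of $f(\bfE)\bfs = \bfzero$. Since we work over $\F_2$, any nonzero polynomial has leading coefficient $1$, so I can write $f(x) = x^m + \sum_{j=0}^{m-1} c_j x^j$, which matches exactly the shape required by the definition of generation. A point worth emphasizing first is that the definition used in this paper does not insist that the recurrence be of minimal order; it only requires that the coefficients of $f$ coincide with those of the recursion. Thus the corollary is strictly more general than the preceding Lemma, which dealt only with the minimal-order case.

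For the forward direction, I would assume that $\bfs$ is generated by $f(x)$, so that
\[
s_{i+m} = c_{m-1}s_{i+m-1} + c_{m-2}s_{i+m-2} + \cdots + c_0 s_i
\]
holds for every $i \geq 0$. Transposing everything to the left-hand side, and remembering that arithmetic is modulo $2$, this becomes $s_{i+m} + \sum_{j=0}^{m-1} c_j s_{i+j} = 0$, which is precisely the $i$-th coordinate of the cyclic sequence $f(\bfE)\bfs$. Because this vanishes for every $i$, the whole sequence $f(\bfE)\bfs$ is $\bfzero$.

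For the converse, I would start from $f(\bfE)\bfs = \bfzero$ and read off, for each $i \geq 0$, the relation $\bigl(\bfE^m + \sum_{j=0}^{m-1} c_j \bfE^j\bigr) s_i = 0$. Rearranging once more modulo $2$ yields $s_{i+m} = c_{m-1}s_{i+m-1} + \cdots + c_0 s_i$, which is exactly the condition for $\bfs$ to be generated by $f(x)$. The cyclic structure of $\bfs$ is harmless here: indices beyond $N-1$ are reduced via $s_{i+N} = s_i$, and this is already built into viewing $f(\bfE)\bfs$ as a cyclic sequence.

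I do not anticipate a genuine obstacle. The step that most deserves an explicit sentence is just the convention check---that ``nonzero'' over $\F_2$ forces the leading coefficient to be $1$, placing $f(x)$ into the exact monic form required by the definition of generation---after which the argument is a bookkeeping identity between an operator equation and the coordinate recursion it encodes. The nonzero hypothesis on $\bfs$ is not used in either direction; it is presumably included only to rule out the degenerate case in which the statement becomes vacuous.
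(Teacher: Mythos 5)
Your proof is correct and matches the paper's (implicit) justification: the corollary is left unproved there precisely because, as you show, $f(\bfE)\bfs=\bfzero$ and the defining recursion for ``$\bfs$ is generated by $f(x)$'' are the same coordinate-wise identity once the nonzero polynomial is written in the monic form $x^m+\sum_{j=0}^{m-1}c_jx^j$. Your side remarks (that minimality is not assumed and that the nonzero hypothesis on $\bfs$ is not needed for the equivalence) are accurate and consistent with how the corollary is used later, e.g.\ in \Lref{lem:gcd_pol}.
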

\begin{corollary}
\label{cor:before_zero}
If $\bfs$ is a nonzero cyclic sequence generated by the nonzero polynomial $f(x)^m$ and ${f(\bfE)^{m-1}\bfs =\bfr}$,
then $\bfs$ is not generated by $f(x)^{m-1}$ if and
only if the sequence $\bfr$ is a nonzero sequence generated by $f(x)$.
\end{corollary}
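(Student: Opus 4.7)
The plan is to obtain both directions of the equivalence by applying Corollary~\ref{cor:gen_zero} at two different powers of $f(\bfE)$, together with the operator identity $f(\bfE)^m = f(\bfE)\cdot f(\bfE)^{m-1}$. The central observation is that under this correspondence the statement ``$\bfs$ is generated by $f(x)^{m-1}$'' simply rewrites as $\bfr = f(\bfE)^{m-1}\bfs = \bfzero$, while ``$\bfr$ is generated by $f(x)$'' rewrites as $f(\bfE)\bfr = \bfzero$. Once these translations are in place, the corollary should essentially collapse into a two-line check.

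From here the argument is almost automatic. Because $\bfs$ is generated by $f(x)^m$, Corollary~\ref{cor:gen_zero} yields $f(\bfE)^m \bfs = \bfzero$, and composing the operators this is exactly $f(\bfE)\bfr = \bfzero$. Consequently, as soon as one knows $\bfr \neq \bfzero$, a second invocation of Corollary~\ref{cor:gen_zero} in the reverse direction identifies $\bfr$ as a nonzero sequence generated by $f(x)$. The forward implication is then immediate: ``$\bfs$ is not generated by $f(x)^{m-1}$'' means $\bfr \neq \bfzero$, and the preceding sentence shows such an $\bfr$ is generated by $f(x)$. The reverse implication is similarly short: if $\bfr$ is a nonzero sequence then $f(\bfE)^{m-1}\bfs \neq \bfzero$, so Corollary~\ref{cor:gen_zero} forbids $\bfs$ from being generated by $f(x)^{m-1}$.

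The only subtlety I foresee is that Corollary~\ref{cor:gen_zero} is stated only for nonzero sequences, so at each invocation the relevant sequence must be verified to be nonzero. This holds throughout the argument: $\bfs$ is nonzero by hypothesis, and the nonvanishing of $\bfr$ is either exactly the hypothesis (in the reverse direction) or an immediate consequence of the hypothesis (in the forward direction). I do not expect any deeper obstacle; the whole statement is really a bookkeeping corollary designed to pave the way for an induction on the exponent $m$ later in the paper.
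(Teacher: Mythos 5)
Your proof is correct and follows exactly the route the paper intends: the paper states this as an unproved corollary of Corollary~\ref{cor:gen_zero}, and your argument simply spells out that application (translating ``generated by $f(x)^{m-1}$'' into $\bfr=\bfzero$ and using $f(\bfE)\bfr=f(\bfE)^m\bfs=\bfzero$ to handle the nonzero case). Your attention to the nonzero hypotheses needed for each invocation of Corollary~\ref{cor:gen_zero} is appropriate and introduces no gap.
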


The sequence $\bfs$ generated by the polynomial $f(x)$, can be described in terms of
a \emph{linear feedback shift-register sequence} with a feedback function
$x_{m+1}=f(x_1,x_2,\ldots,x_m)$, where $(x_1,x_2,\ldots,x_m)$ can be any binary $m$-tuple.
As a special case, these $m$-tuples can be also those
of the sequence $\bfs$. The fundamental theory of shift-register sequences is given in~\cite{Gol67}.
A polynomial $f(x)$ is \emph{irreducible} if it has no nontrivial divisor
polynomial of a smaller degree. An irreducible polynomial of degree $k$ is called \emph{primitive}
if its nonzero roots are generators (primitive elements) of the field $\F_{2^k}$.
The nonzero cyclic sequence~$\bfs$ generated by a primitive polynomial of degree $k$ has length $2^k-1$,
and each nonzero $k$-tuple appears exactly once in a window of length $k$ in $\bfs$.
Such a sequence is called an \emph{m-sequence} for \emph{maximal length linear shift-register sequence}~\cite{Gol67}.
The period of sequences generated by other irreducible polynomials (which are not primitive) can
be also calculated by using the theory in~\cite{Gol67}. For this purpose
the following definition is given.
The \emph{exponent} of a polynomial $f(x)$ is the smallest integer $e$
such that $f(x)$ divides $x^e -1$.

There is a connection between the exponent of any polynomial $f(x)$ and the periods of the related
sequences generated by its shift-register. It does not give immediately the period of the sequences
for every polynomial $f(x)$, but it does when $f(x)$ is an irreducible polynomial.

\begin{theorem}
If $f(x)$ is an irreducible polynomial, then the nonzero sequences which it generates have
period which is equal to the exponent of $f(x)$.
\end{theorem}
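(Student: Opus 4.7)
The plan is to combine Corollary~\ref{cor:gen_zero} with the fact that the collection of polynomials annihilating $\bfs$ is an ideal of $\F_2[x]$, and hence principal. Concretely, I would work with the set $I(\bfs) = \{g(x) \in \F_2[x] : g(\bfE)\bfs = \bfzero\}$; it is closed under addition, and closed under multiplication by any polynomial since $g(\bfE)\bfs = \bfzero$ implies $(hg)(\bfE)\bfs = h(\bfE)(g(\bfE)\bfs) = \bfzero$. By principality, $I(\bfs) = (\mu(x))$ for a unique monic $\mu(x)$.

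The first step is to identify this minimal annihilator as $f(x)$ itself. By hypothesis $f(\bfE)\bfs = \bfzero$, so $\mu(x) \mid f(x)$. Since $f(x)$ is irreducible, $\mu(x)$ is either $1$ or $f(x)$; but $\mu(x) = 1$ would force $\bfs = \bfzero$, contradicting the assumption that $\bfs$ is nonzero. Hence $\mu(x) = f(x)$.

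Now let $N$ be the period of $\bfs$ and $e$ the exponent of $f(x)$. Since $\bfE^N \bfs = \bfs$ by the definition of period, $x^N - 1 \in I(\bfs)$, so $f(x) \mid x^N - 1$; by minimality of $e$ this gives $e \leq N$. Conversely, $f(x) \mid x^e - 1$ places $x^e - 1$ in $I(\bfs)$ (write $x^e - 1 = f(x)h(x)$ and note $(x^e - 1)(\bfE)\bfs = h(\bfE)\,f(\bfE)\bfs = \bfzero$), so $\bfE^e \bfs = \bfs$, and therefore $N \mid e$, yielding $N \leq e$. Combining the two inequalities gives $N = e$.

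The only delicate point is establishing that $f(x)$ is the \emph{minimal} annihilator rather than merely an annihilator --- this is precisely where the irreducibility hypothesis is essential; without it, the minimal polynomial of $\bfs$ could be a proper factor of $f(x)$ and the equality $N = e$ could fail. Once that step is in place, the rest is a short divisibility argument against the defining property of the exponent.
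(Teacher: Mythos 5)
Your proof is correct. Note that the paper states this theorem \emph{without} proof, treating it as a classical fact from shift-register theory (the surrounding text defers to Golomb's book), so there is no in-paper argument to compare against; what you wrote is the standard argument, and it meshes cleanly with the machinery the paper does develop. Your annihilator-ideal step is essentially the content of Lemma~\ref{lem:gcd_pol} and its corollary on the uniqueness of the minimal zero polynomial: closure of $I(\bfs)$ under taking greatest common divisors is how the paper phrases principality of the annihilator in $\F_2[x]$. The identification $\mu(x)=f(x)$ via irreducibility and the two divisibility inequalities are sound; in particular you correctly observe that $\mu(x)=1$ would force $\bfs=\bfzero$, and that $f(x)\mid x^{e}-1$ gives $(\bfE^{e}-1)\bfs=\bfzero$, i.e.\ $s_{i+e}=s_i$ for all $i$, whence $N\leq e$ by minimality of the period (the stronger claim $N\mid e$ that you mention in passing needs a one-line division-algorithm argument, but you never actually use it, so there is no gap). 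The only cosmetic remark: the exponent of $f(x)$ is defined only when $f(0)\neq 0$, which for an irreducible binary polynomial excludes only $f(x)=x$; that polynomial generates no nonzero sequence under the paper's definition (its recursion has zero constant term), so the statement is vacuous there and your proof harmlessly assumes $f(x)\neq x$.
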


Each sequence $\bfs$ can be generated by several distinct polynomials, but it appears that
the structure of the polynomials which generate $\bfs$ is determined by the polynomial of least
degree which generates $\bfs$.

For a sequence $\bfs$, the polynomial $f(\bfE)$ is a \emph{minimal zero polynomial} for $\bfs$
if $f(\bfE)$ is a polynomial of the least degree such that $f(\bfE)\bfs = \bfzero$.
In other words, in view of Corollary~\ref{cor:gen_zero}, $f(x)$ is a polynomial of least degree
which generates $\bfs$. The polynomial $f(x)$ will be called a \emph{minimal (connection) polynomial} that generates $\bfs$.

\begin{lemma}
\label{lem:gcd_pol}
If the two polynomials $f(x)$ and $g(x)$ generate the same sequence $\bfs$, then
$h(x) = g.c.d. (f(x),g(x))$ also generates $\bfs$, where $g.c.d. (\alpha(x),\beta(x))$ is the
the greatest common divisor of the polynomials $\alpha(x)$ and~$\beta(x)$.
\end{lemma}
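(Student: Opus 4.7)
The plan is to invoke Bezout's identity for the polynomial ring $\F_2[x]$ together with Corollary~\ref{cor:gen_zero}, which translates between the statements ``$f(x)$ generates $\bfs$'' and ``$f(\bfE)\bfs = \bfzero$'' for nonzero $\bfs$.

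First, I would observe that the case where $\bfs$ is the all-zero sequence is trivial, since every nonzero polynomial generates $\bfzero$. So assume $\bfs$ is a nonzero cyclic sequence. Since $f(x)$ and $g(x)$ each generate $\bfs$, Corollary~\ref{cor:gen_zero} gives $f(\bfE)\bfs = \bfzero$ and $g(\bfE)\bfs = \bfzero$. Because $\F_2[x]$ is a Euclidean domain, there exist polynomials $u(x), v(x) \in \F_2[x]$ with
\[
h(x) = u(x) f(x) + v(x) g(x).
\]

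Next, substitute $\bfE$ for $x$ and apply the resulting operator to $\bfs$. Using linearity of the shift operator and the fact that any polynomial in $\bfE$ commutes with any other polynomial in $\bfE$, we obtain
\[
h(\bfE)\bfs \;=\; u(\bfE)\bigl(f(\bfE)\bfs\bigr) + v(\bfE)\bigl(g(\bfE)\bfs\bigr) \;=\; u(\bfE)\bfzero + v(\bfE)\bfzero \;=\; \bfzero.
\]
Since $f$ and $g$ are nonzero, their gcd $h$ is also a nonzero polynomial, so the reverse direction of Corollary~\ref{cor:gen_zero} applies to the nonzero sequence $\bfs$ and yields that $h(x)$ generates $\bfs$, as required.

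There is no real obstacle; the only subtle point is making sure one does not need $\bfs$ to be nonzero to invoke Bezout (it is purely a statement about polynomials) and remembering that the corollary's ``if and only if'' hypothesis asks for nonzero $\bfs$ and nonzero polynomial, both of which are satisfied. If desired, one could also phrase the argument purely in terms of minimal polynomials: the minimal polynomial of $\bfs$ divides every polynomial that annihilates $\bfs$, hence divides both $f$ and $g$, hence divides $h$, so $h(\bfE)\bfs = \bfzero$; but the Bezout route is cleaner and uses only the tools already developed in the excerpt.
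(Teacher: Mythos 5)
Your proof is correct and follows essentially the same route as the paper's: apply Corollary~\ref{cor:gen_zero} to get $f(\bfE)\bfs = \bfzero$ and $g(\bfE)\bfs = \bfzero$, write $h$ as a Bezout combination of $f$ and $g$, and conclude $h(\bfE)\bfs = \bfzero$. Your extra care with the all-zero sequence is a minor refinement the paper omits.
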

\begin{proof}
By Corollary~\ref{cor:gen_zero}, $f(x)$ generates $\bfs$ if and only if $f(\bfE)\bfs = \bfzero$.
Similarly, $g(x)$ generates~$\bfs$ if and only if $g(\bfE)\bfs = \bfzero$.
By the Euclidian algorithm, there exists two polynomials $a(x)$ and~$b(x)$
such that $h(x) = a(x)f(x) + b(x) g(x)$. Hence, $h(\bfE) \bfs = a(\bfE)f(\bfE) \bfs + b(\bfE) g(\bfE)\bfs$,
and since $f(\bfE)\bfs = \bfzero$ and $g(\bfE)\bfs = \bfzero$, it follows that $h(\bfE) \bfs = \bfzero$.
Therefore, by Corollary~\ref{cor:gen_zero}, the polynomial $h(x)$ generates $\bfs$.
\end{proof}

\begin{corollary}
If $\bfs$ is a nonzero sequence, then it has a unique minimal zero polynomial.
\end{corollary}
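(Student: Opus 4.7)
The plan is to argue by contradiction using \Lref{lem:gcd_pol}. Suppose $\bfs$ admits two distinct minimal zero polynomials $f(x)$ and $g(x)$. By definition both have the same least degree, call it $m$, and both satisfy $f(\bfE)\bfs = g(\bfE)\bfs = \bfzero$, so by \Cref{cor:gen_zero} both generate $\bfs$.

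Next I would apply \Lref{lem:gcd_pol} to $f(x)$ and $g(x)$ to conclude that $h(x) \deff g.c.d.(f(x),g(x))$ also generates $\bfs$, and hence $h(\bfE)\bfs = \bfzero$ by \Cref{cor:gen_zero}. Since $h(x)$ divides both $f(x)$ and $g(x)$, we have $\deg(h) \leq m$; the minimality of $m$ as the degree of any polynomial annihilating $\bfs$ then forces $\deg(h) = m$. Because $h(x)$ divides $f(x)$ and both are of equal degree $m$, we get $f(x) = h(x)$, and similarly $g(x) = h(x)$ (here I am using the fact that over $\F_2$ every nonzero polynomial has leading coefficient $1$, so there is no ambiguity up to a scalar — the polynomials match exactly, not merely up to unit). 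This contradicts the assumption $f(x) \neq g(x)$.

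I do not expect a real obstacle: the argument is essentially a two-line consequence of the already-established \Lref{lem:gcd_pol} together with \Cref{cor:gen_zero}. The only subtlety worth mentioning explicitly is the normalization issue — over a general field one would only obtain $f$ and $g$ agreeing up to a nonzero scalar, but over $\F_2$ nonzero polynomials are automatically monic, so equality of degrees together with divisibility immediately yields equality of polynomials and hence uniqueness.
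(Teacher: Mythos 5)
Your argument is correct and is exactly the intended one: the paper states this corollary immediately after \Lref{lem:gcd_pol} precisely because the gcd of two candidate minimal zero polynomials again generates $\bfs$, forcing equality of degrees and hence (over $\F_2$, where nonzero polynomials are monic) equality of the polynomials. Nothing is missing.
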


The linear complexity of the sequence $\bfs$ can be computed using the greatest common divisors of
two related polynomials as follows. Let $\bfs(x)$ be the \emph{generating function} of $\bfs$ considered
as an infinite sequence, defined by
$$
\bfs(x)= s_0 + s_1 x + s_2 x^2 + s_3 x^3 + \cdots~,
$$
or by it periodic sequence
$$
\bfs^N (x) = s_0 + s_1 x + s_2 x^2 + s_3 x^3 + \cdots + s_{N-1} x^{N-1}~.
$$
The generating function $s(x)$ can be written as
$$
\bfs(x) = \frac{\bfs^N(x)}{1 - x^N} = \frac{\bfs^N(x)/g.c.d.(\bfs^N(x),1-x^N)}{(1-x^N)/g.c.d.(\bfs^N(x),1-x^N)} = \frac{g(x)}{f_\bfs(x)},
$$
where
$$
g(x)= \bfs^N(x)/g.c.d.(\bfs^N(x),1-x^N)~,
$$
$$
f_\bfs(x)= (1-x^N)/g.c.d.(\bfs^N(x),1-x^N)~.
$$
Obviously, $g.c.d.(g(x),f_\bfs(x))=1$, $\deg g(x)< \deg f_\bfs(x)$, the polynomial
$f_\bfs(\bfE)$ is the minimal zero polynomial of $\bfs$, and $\deg f_\bfs (x) = c(\bfs)$~\cite{Din91}.

Since the sequence $\bfs$ can be any binary sequence of period~$N$, it follows that computing
the linear complexity of $\bfs$ is equivalent to the computation of the degree of the
greatest common divisor of $\bfs^N (x)$ and $X^N-1$. Finding the minimal zero polynomial of $\bfs$
is equivalent to the computation of greatest common divisor of $\bfs^N (x)$ and $X^N-1$.
Hence, our algorithm can be regarded as an algorithm for the greatest common divisor of some cases.

The computation of the linear complexity in~\cite{Che05,Che06,Mei08,WXC02,XWLI} is based
on these computations of the greatest common divisor. Our method is different as it is based
on the direct definition of the linear complexity and on simple computations as done
in the well-known Games-Chan Algorithm~\cite{GaCh83}.

Binary sequences of period $2^n$ are very important among all binary sequences.
The linear complexity $c(\bfs)$ of a binary sequence $\bfs=[L ~ R]$ of period $2^n$,
where $L$ and $R$ are sequences of length $2^{n-1}$,
can be recursively computed by The Games-Chan Algorithm as follows~\cite{GaCh83}:
when $L+R= \bfzero$, then $c(\bfs)=c(L)$; otherwise we set
$c(\bfs)=2^{n-1}+c(L + R)$.
This algorithm is described in more details as follows.

\vspace{0.2cm}

\noindent
{\bf The Games-Chan Algorithm:}

\vspace{0.1cm}

The input to the algorithm is a sequence $\bfs$ of period
$2^n$. If $\bfs \neq \bfzero$, the complexity $c$ of $\bfs$
is computed recursively as follows. Initially, set $c_n=0$ and
$\cA_n=\bfs$. At a typical step of the algorithm the left half of
$\cA_m$, $L(\cA_m)=[b_0,\cdots,b_{2^{m-1}-1}]$, is added to the
right half, $R(\cA_m)=[b_{2^{m-1}},\cdots,b_{2^m-1}]$, the result
being a sequence $\cB_m$, of length $2^{m-1}$. If
$\cB_m= \bfzero$, $\cA_m$ is replaced by $\cA_{m-1}=L(\cA_m)$,
and the complexity is left unchanged, i.e., $c_{m-1}=c_m$. If $\cB_m
\neq \bfzero$, $\cA_m$ is replaced by $\cA_{m-1}=\cB_m$, and
$c_m$ is replaced by $c_{m-1}=c_m+2^{m-1}$. The complexity of $\bfs$
is given by $c(\bfs)=c_0+1$.

\vspace{0.3cm}

The number of recursive steps in the algorithm is $n$ for a sequence of period $N=2^n$,
and this is the number of integer computations (on integers of length at most $n$ bits) to obtain~$c(\bfs)$.
Since each such integer addition is for a distinct power of two, no more than a total of $n$ bit operations are required.
The algorithm also has to make at most $N$ bit computations (comparisons or additions) for
a sequence of period $N=2^n$. The reason is that in the first step~$2^{n-1}$ such
operations are required; and the number of operations is reduced by half from
one step to the following step. Thus,

\begin{theorem}
\label{thm:comp_GaCh}
The complexity in The Games-Chan Algorithm is at most $N +n$ bit operations,
for a sequence of period $N=2^n$.
\end{theorem}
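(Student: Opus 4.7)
The plan is to bound the total number of bit operations by splitting them into two disjoint categories: those performed on the sequence symbols (forming the sums of left and right halves and testing for equality with $\bfzero$), and those performed on the integer accumulator that tracks the complexity $c_m$.

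First, I would count the sequence-side operations. The algorithm runs through recursion indices $m = n, n-1, \ldots, 1$, and at step $m$ it adds the two halves $L(\cA_m)$ and $R(\cA_m)$, each of length $2^{m-1}$, to produce $\cB_m$. This uses exactly $2^{m-1}$ bit additions; the test $\cB_m = \bfzero$ can be folded into these additions at no extra cost (for instance, by OR-ing the result bits on the fly). Summing the geometric series over all recursion levels yields $\sum_{m=1}^{n} 2^{m-1} = 2^n - 1 = N - 1$ bit operations for the sequence side.

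Next, I would count the integer-side operations. The only updates of $c$ are of the form $c_{m-1} = c_m + 2^{m-1}$, so across the $n$ recursion steps the powers of two added live in the pairwise distinct bit positions $n-1, n-2, \ldots, 0$. Consequently each such update simply sets one previously-unset bit of $c$, costs a single bit operation, and triggers no carry, contributing at most $n$ bit operations in total. The terminal step $c(\bfs) = c_0 + 1$ costs at most one additional bit operation, which is absorbed into the slack of the combined bound.

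Adding the two counts gives at most $(N-1) + n + 1 = N + n$ bit operations, which is the claimed bound. The only conceptual point requiring care is the no-carry argument for the integer updates, which follows immediately from the distinctness of the summands $2^{m-1}$; everything else is a direct evaluation of a geometric sum, so I do not anticipate any genuine obstacle.
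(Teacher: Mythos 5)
Your proposal is correct and follows essentially the same route as the paper: the sequence-side work is bounded by the geometric sum $\sum_{m=1}^{n}2^{m-1}=N-1\le N$ (with additions and zero-tests counted together), and the integer-side work is bounded by $n$ because the $n$ updates add pairwise distinct powers of two. The paper's argument is just a terser version of yours, so there is nothing further to reconcile.
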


\section{A General Method for any Binary Sequence}
\label{sec:general}

Let $\bfs$ be a binary sequence of period at most $N$, which implies that $(E^N -1)\bfs =\bfzero$.
Assuming that we can factorize $x^N -1$ (or equivalently $\bfE^N-1$) efficiently into irreducible factors,
we want to find the minimal zero polynomial $g(\bfE)$ of $\bfs$, i.e. the smallest factor of $\bfE^N -1$ for which
$g(\bfE) \bfs = \bfzero$ (or equivalently the factor of $x^N-1$, with the smallest degree, which generates~$\bfs$).

Let $x^N -1=q_1(x)^{\alpha_1} q_2(x)^{\alpha_2} \ldots q_t(x)^{\alpha_t}$,
where the $q_i(x)$'s are distinct irreducible polynomials and $1 \leq \alpha_i \leq 2^{\gamma_i}$
for some nonnegative integer $\gamma_i$, $1 \leq i \leq t$. We want to find the polynomial of the smallest degree
$g(x)=q_1(x)^{\delta_1}q_2(x)^{\delta_2} \ldots q_t(x)^{\delta_t}$, such that $g(\bfE)\bfs=\mathbf{0}$,
where $0 \leq \delta_i \leq 2^{\gamma_i}$, $1 \leq i \leq t$. Since at least one of
the $\delta_i$'s is nonzero, we assume w.l.o.g. that $\delta_t \neq 0$.\\
\\
For any irreducible polynomial $q(x)$, let $\cS(q(x))$ be the set of all nonzero cyclic sequences generated by
$q(x)$. For example, $\cS(x^4+x^3+x^2+x+1)=\{00011,01010,11011\}$.
\begin{lemma}
\label{lem:closeundercomposition}
Let $q(x)$ be an irreducible polynomial and $\bfr \in S(q(x))$. Then for any polynomial~$f(x)$,
such that $g.c.d.(q(x),f(x))=1$, we have that $f(\bfE)\bfr \in \cS(q(x))$.
\end{lemma}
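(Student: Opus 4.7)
The plan is to show the two required properties for $f(\bfE)\bfr$ to lie in $\cS(q(x))$: that it is generated by $q(x)$, and that it is nonzero.

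First I would verify that $q(x)$ generates $f(\bfE)\bfr$. Since polynomials in the shift operator $\bfE$ commute, we have
\[
q(\bfE)\bigl(f(\bfE)\bfr\bigr) = f(\bfE)\bigl(q(\bfE)\bfr\bigr).
\]
Because $\bfr \in \cS(q(x))$, by definition $\bfr$ is generated by $q(x)$, so $q(\bfE)\bfr = \bfzero$ by Corollary~\ref{cor:gen_zero}. Hence the right-hand side is $f(\bfE)\bfzero = \bfzero$, and applying Corollary~\ref{cor:gen_zero} in the other direction yields that $q(x)$ generates $f(\bfE)\bfr$ (assuming the latter is nonzero, which is handled next).

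Next I would show $f(\bfE)\bfr \neq \bfzero$ by contradiction. Suppose $f(\bfE)\bfr = \bfzero$. Then by Corollary~\ref{cor:gen_zero} the polynomial $f(x)$ also generates $\bfr$. Since $q(x)$ generates $\bfr$ as well, Lemma~\ref{lem:gcd_pol} implies that $h(x) = \gcd(f(x),q(x))$ generates $\bfr$. By hypothesis, $h(x) = 1$, so applying the identity polynomial as a shift operator forces $\bfr = \bfzero$, contradicting $\bfr \in \cS(q(x))$ (which is defined as the set of \emph{nonzero} cyclic sequences generated by $q(x)$).

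Combining the two facts, $f(\bfE)\bfr$ is a nonzero cyclic sequence generated by $q(x)$, which is exactly the statement that $f(\bfE)\bfr \in \cS(q(x))$. There is no real obstacle here since all the work is done by Corollary~\ref{cor:gen_zero} and Lemma~\ref{lem:gcd_pol}; the only point requiring care is the nonzero verification, and it is resolved by the coprimality assumption. Note that irreducibility of $q(x)$ is not explicitly invoked in this argument, though it is implicit in the background theory making $\cS(q(x))$ a clean object to reason about.
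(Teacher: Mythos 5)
Your proof is correct and follows essentially the same route as the paper's: both establish nonvanishing via Corollary~\ref{cor:gen_zero} and Lemma~\ref{lem:gcd_pol} applied to $\gcd(q(x),f(x))=1$, and both use commutativity of polynomials in $\bfE$ to get $q(\bfE)(f(\bfE)\bfr)=f(\bfE)(q(\bfE)\bfr)=\bfzero$. Your observation that the constant polynomial $1$ generating $\bfr$ would force $\bfr=\bfzero$ is a slightly more explicit rendering of the contradiction than the paper's, but the argument is the same.
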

\begin{proof}
Suppose that $\bfs = f(\bfE)\bfr$, for some sequence ${\bfr \in S(q(x))}$.
Assume first that $\bfs=\bfzero$. It implies by Corollary~\ref{cor:gen_zero} that $f(x)$ generates $\bfr$.
Since $q(x)$ also generates $\bfr$, it follows by Lemma~\ref{lem:gcd_pol}
that $g.c.d.(q(x),f(x))$ generates $\bfr$.
But $g.c.d.(q(x),f(x))=1$ and hence $g.c.d.(q(x),f(x))$ does not generates $\bfr$. Thus, $\bfs$ is a nonzero sequence.

Since $\bfr \in \cS(q(x))$ it follows that
$$
q(\bfE) \bfs = q(\bfE) (f(\bfE) \bfr) =f(\bfE) (q(\bfE)\bfr)=f(\bfE) \bfzero = \bfzero .
$$
Since $\bfs$ is a nonzero sequence, it follows that
${\bfs =f(\bfE)\bfr \in \cS(q(x))}$.
\end{proof}
Lemma~\ref{lem:closeundercomposition} leads to two interesting consequences.
\begin{corollary}
\label{cor:divisibility}
If $q(x)$ is an irreducible polynomial and $\bfr$ is a nonzero sequence generated by $q(x)$,
then for any polynomial $f(x)$, $f(\bfE) \bfr = \bfzero$ if and only if $f(x)$ is divisible by~$q(x)$.
\end{corollary}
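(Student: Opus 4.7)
The claim is a biconditional, so I would split the proof into the easy forward direction and the harder reverse direction, both of which reduce to material already established in the excerpt.

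For the ``if'' direction, suppose $q(x) \mid f(x)$, say $f(x) = q(x)h(x)$ for some polynomial $h(x)$. Since the shift operator $\bfE$ commutes with itself, $f(\bfE) = h(\bfE)q(\bfE)$, and since $\bfr$ is generated by $q(x)$, Corollary~\ref{cor:gen_zero} gives $q(\bfE)\bfr = \bfzero$. Therefore $f(\bfE)\bfr = h(\bfE)(q(\bfE)\bfr) = h(\bfE)\bfzero = \bfzero$. This step is essentially a one-line computation and should present no difficulty.

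For the ``only if'' direction, I would argue by contradiction. Suppose $f(\bfE)\bfr = \bfzero$ but $q(x) \nmid f(x)$. Because $q(x)$ is irreducible, its only monic divisors are $1$ and $q(x)$ itself, so $q(x) \nmid f(x)$ forces $g.c.d.(q(x),f(x)) = 1$. Now apply Lemma~\ref{lem:closeundercomposition}: since $\bfr \in \cS(q(x))$ and $g.c.d.(q(x),f(x))=1$, we conclude $f(\bfE)\bfr \in \cS(q(x))$, and in particular $f(\bfE)\bfr$ is a \emph{nonzero} cyclic sequence. This contradicts the hypothesis that $f(\bfE)\bfr = \bfzero$, so we must have $q(x) \mid f(x)$.

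The main obstacle, if any, is really just making sure the irreducibility of $q(x)$ is invoked at precisely the right point: it is what lets us pass from ``$q(x) \nmid f(x)$'' to ``$g.c.d.(q(x),f(x)) = 1$'', and this is exactly the hypothesis needed to feed into Lemma~\ref{lem:closeundercomposition}. Everything else is bookkeeping with the operator calculus already set up in the preliminaries, so no further computation is required.
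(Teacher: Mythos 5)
Your proof is correct and follows exactly the same route as the paper: the divisibility direction is the same one-line operator computation, and the converse is the same contradiction argument using irreducibility to get $g.c.d.(q(x),f(x))=1$ and then invoking Lemma~\ref{lem:closeundercomposition} to conclude $f(\bfE)\bfr\in\cS(q(x))$ is nonzero. Nothing is missing.
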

\begin{proof}
If $f(x)$ is divisible by $q(x)$, i.e. $f(x)=h(x) q(x)$, then
$$
f(\bfE) \bfr = h(\bfE) q(\bfE)\bfr = h(\bfE) \bfzero = \bfzero .
$$
Now, suppose that $f(\bfE) \bfr = \bfzero$ and assume that $f(x)$ is not divisible
by $q(x)$. Since $q(x)$ is an irreducible polynomial, it implies that $\gcd(q(x),f(x))=1$.
Therefore, by Lemma~\ref{lem:closeundercomposition}, $f(\bfE)\bfr \in \cS(q(x))$ and hence $f(\bfE)\bfr \neq \bfzero$,
a contradiction. Thus, $f(x)$ is divisible by~$q(x)$.
\end{proof}
The second consequence from Lemma~\ref{lem:closeundercomposition} is the well known Shift and Add
property for m-sequences~\cite{Gol67}. The version given here generalizes the one which is usually used.
\begin{corollary}
Let $\bfs$ be an m-sequence generated by a polynomial $q(x)$. If
$f(x)$ is any other polynomial for which $g.c.d.(q(x),f(x))=1$, then $f(\bfE) \bfs =\bfs$.
\end{corollary}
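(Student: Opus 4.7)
The plan is to combine Lemma~\ref{lem:closeundercomposition} with the defining combinatorial property of m-sequences, reducing the claim to the observation that $\cS(q(x))$ has exactly one element when $q(x)$ is primitive. Since $\bfs$ is nonzero and generated by the primitive (hence irreducible) polynomial $q(x)$, we have $\bfs \in \cS(q(x))$, and the hypothesis $g.c.d.(q(x),f(x))=1$ together with Lemma~\ref{lem:closeundercomposition} immediately yields $f(\bfE)\bfs \in \cS(q(x))$.

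It then remains to show $\cS(q(x)) = \{\bfs\}$. Let $k = \deg q(x)$. By the preliminaries, since $q$ is primitive, its nonzero shift-register output is an m-sequence of period $2^k-1$ in which every nonzero $k$-tuple appears exactly once as a length-$k$ window. Because a window of $k$ consecutive terms at any position determines the remainder of the shift-register output, every nonzero initial state produces a sequence that is some cyclic shift of $\bfs$. Since elements of $\cS(\cdot)$ identify sequences up to shift (as the worked example $\cS(x^4+x^3+x^2+x+1)=\{00011,01010,11011\}$ makes explicit: there are $2^4-1=15$ nonzero initial $4$-tuples but only three entries in $\cS$), these shifted outputs all collapse to the single element $\bfs$. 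Combining with the previous step yields $f(\bfE)\bfs = \bfs$, which is precisely the claim.

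The main obstacle is interpretive rather than technical. The equality $f(\bfE)\bfs = \bfs$ in the corollary must be read as an equality of cyclic sequences; as concrete strings, $f(\bfE)\bfs$ is in general a nontrivial shift $\bfE^j \bfs$ of $\bfs$, as a small calculation with $q(x)=x^3+x+1$ and $f(x)=x+1$ illustrates. Once this convention — already implicit in the definition of $\cS$ and in the cyclic-sequence notation $[s_0,s_1,\ldots,s_{N-1}]$ — is fixed, the argument is a two-line consequence of Lemma~\ref{lem:closeundercomposition} and the single-orbit property of m-sequences generated by primitive polynomials.
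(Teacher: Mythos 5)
Your proof is correct and follows essentially the same route as the paper: apply Lemma~\ref{lem:closeundercomposition} to get $f(\bfE)\bfs \in \cS(q(x))$, then use the fact that for a primitive $q(x)$ the set $\cS(q(x))$ contains only the single cyclic sequence $\bfs$. Your additional remarks — justifying the single-orbit property and flagging that the equality $f(\bfE)\bfs=\bfs$ holds as cyclic sequences (i.e., up to shift) — are accurate elaborations of what the paper leaves implicit rather than a different argument.
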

\begin{proof}
Since $\bfs$ is an m-sequence, it follows that $q(x)$ is a primitive polynomial
and $\bfs$ is the only sequence in $\cS(q(x))$. The claim is now an immediate consequence
from Lemma~\ref{lem:closeundercomposition}.
\end{proof}
\begin{theorem}
\label{thm:long_poly}
Let $\bfs$ be a binary sequence, whose minimal polynomial is $g(x)=q_1(x)^{\delta_1}\ldots q_t(x)^{\delta_t}$,
where the $q_i$'s are distinct irreducible polynomials and $\delta_i \geq 1$, $1\leq i \leq t$. Then,
\begin{equation*}
q_1(\bfE)^{\delta_1}q_2(\bfE)^{\delta_2} \ldots q_{t-1}(\bfE)^{\delta_{t-1}}q_t(\bfE)^{\delta_t-1}\bfs \in \cS(q_t(x)).
\end{equation*}
Furthermore, for every $1 \leq i \leq t-1$, let $d_i$ be an integer such that $d_i \geq \delta_i$.  Then,
\begin{equation*}
q_1(\bfE)^{d_1}q_2(\bfE)^{d_2} \ldots q_{t-1}(\bfE)^{d_{t-1}}q_t(\bfE)^{\delta_t-1}\bfs \in \cS(q_t(x)).
\end{equation*}
\end{theorem}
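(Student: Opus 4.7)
The plan is to factor out the $q_t$-part of the minimal polynomial, reduce everything to a single application of Corollary~\ref{cor:gen_zero}, and then use Lemma~\ref{lem:closeundercomposition} to obtain the ``furthermore'' clause for free. Concretely, write $g(x) = h(x) q_t(x)$ where $h(x) = q_1(x)^{\delta_1} q_2(x)^{\delta_2} \cdots q_{t-1}(x)^{\delta_{t-1}} q_t(x)^{\delta_t-1}$, and set $\bfr \deff h(\bfE)\bfs$. Note that $\bfr$ is a cyclic sequence since, if $\bfs$ has period $N$, the shift operator $h(\bfE)$ commutes with $\bfE^N - 1$, so $\bfr$ inherits periodicity from $\bfs$.

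The first claim breaks into two easy observations. First, $\bfr \neq \bfzero$: indeed, if $h(\bfE)\bfs = \bfzero$, then by Corollary~\ref{cor:gen_zero} the polynomial $h(x)$ would generate $\bfs$, contradicting the fact that $g(x)$ is the minimal polynomial of $\bfs$ since $\deg h(x) = \deg g(x) - \deg q_t(x) < \deg g(x)$. Second, $q_t(\bfE)\bfr = q_t(\bfE) h(\bfE) \bfs = g(\bfE)\bfs = \bfzero$, so by Corollary~\ref{cor:gen_zero} the irreducible polynomial $q_t(x)$ generates the nonzero cyclic sequence $\bfr$. By the definition of $\cS(q_t(x))$ this yields $\bfr \in \cS(q_t(x))$.

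For the ``furthermore'' part, fix integers $d_i \geq \delta_i$ for $1 \leq i \leq t-1$, and set
$$
f(x) \deff \prod_{i=1}^{t-1} q_i(x)^{d_i-\delta_i}.
$$
Then the expression appearing in the second display of the theorem equals $f(\bfE)\bfr$. Since the $q_i$'s are pairwise distinct irreducible polynomials, $q_t(x)$ shares no common factor with any $q_i(x)$ for $i < t$, so $g.c.d.(f(x),q_t(x)) = 1$. Applying Lemma~\ref{lem:closeundercomposition} with this $f$ and the sequence $\bfr \in \cS(q_t(x))$ from the first part immediately gives $f(\bfE)\bfr \in \cS(q_t(x))$, which is exactly the desired conclusion.

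I do not anticipate a serious obstacle; the main thing to be careful about is invoking minimality of $g(x)$ in the correct direction (i.e., that no strictly lower-degree polynomial annihilates $\bfs$) in order to guarantee that $\bfr$ is nonzero, as well as noting that the closure lemma requires the irreducibility of $q_t(x)$ together with coprimality of $f(x)$ with $q_t(x)$---both of which are automatic from the hypothesis that the $q_i(x)$ are distinct irreducibles.
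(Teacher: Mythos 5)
Your proposal is correct and follows essentially the same route as the paper: define the sequence $\bfr = q_1(\bfE)^{\delta_1}\cdots q_{t-1}(\bfE)^{\delta_{t-1}}q_t(\bfE)^{\delta_t-1}\bfs$, argue it is nonzero by minimality of $g(x)$ and annihilated by $q_t(\bfE)$ so that $\bfr \in \cS(q_t(x))$, then apply Lemma~\ref{lem:closeundercomposition} with the coprime polynomial $\prod_{i=1}^{t-1} q_i(x)^{d_i-\delta_i}$ for the ``furthermore'' clause. You actually spell out the nonzeroness step (via Corollary~\ref{cor:gen_zero} and the degree comparison) more explicitly than the paper does.
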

\begin{proof}
If $\bfs'=q_1(\bfE)^{\delta_1}q_2(\bfE)^{\delta_2} \ldots q_{t-1}(\bfE)^{\delta_{t-1}}q_t(\bfE)^{\delta_t-1}\bfs$, then
since $g(x)$ is the minimal polynomial of $\bfs$, it follows that
$\bfs' \neq \bfzero$ and $q_t(\bfE)\bfs' =\bfzero$. This implies that $\bfs' \in \cS(q_t(x))$.
Since $\bfs' \neq \bfzero$, $q_t(\bfE)\bfs' =\bfzero$, and
${g.c.d.(q_1(x)^{d_1-\delta_1}q_2(x)^{d_2-\delta_2} \ldots q_{t-1}(x)^{d_{t-1}-\delta_{t-1}},q_t(x))=1}$,
it follows by Lemma~\ref{lem:closeundercomposition} that
$$
q_1(\bfE)^{d_1-\delta_1}q_2(\bfE)^{d_2-\delta_2} \ldots q_{t-1}(\bfE)^{d_{t-1}-\delta_{t-1}}\bfs' \in \cS(q_t(x)),
$$
and hence
\begin{equation*}
q_1(\bfE)^{d_1}q_2(\bfE)^{d_2} \ldots q_{t-1}(\bfE)^{d_{t-1}}q_t(\bfE)^{\delta_t-1}\bfs \in \cS(q_t(x)).
\end{equation*}
\end{proof}

Now, for our given sequence $\bfs$ of period $N$, we want to find the smallest $\delta_t \geq 1$ such that
$q_1(\bfE)^{\delta_1} \ldots q_t(\bfE)^{\delta_t} \bfs=\bfzero$. By Theorem~\ref{thm:long_poly}
it is sufficient to find the smallest $\delta_t \geq 1$, such that
\begin{equation*}
q_1(\bfE)^{2^{\gamma_1}}q_2(\bfE)^{2^{\gamma_2}} \ldots q_{t-1}(\bfE)^{2^{\gamma_{t-1}}}  q_t(\bfE)^{\delta_t -1} \bfs  \in \cS(q_t(x)).
\end{equation*}

The algorithm to find $\delta_t$ for $\bfs$:\\
Let $\bfr=q_1(\bfE)^{2^{\gamma_1}}q_2(\bfE)^{2^{\gamma_2}}\ldots q_{t-1}(\bfE)^{2^{\gamma_{t-1}}}\bfs$.\\
If $\bfr=\bfzero$, then $\delta_t=0$, and proceed to find
$\delta_{t-1}$. Otherwise, set $\bfr' =\bfr$ and $\delta_t=0$ ($0 < \delta_t \leq 2^{\gamma_t}$ in this case).
\\For each $j$ from $1$ to $\gamma_t$, let $\mathbf{A_{j}}=q_t^{2^{\gamma_t -j}}(\bfE)\bfr'$\\
\textbf{Case 1} : If $ \mathbf{A_j}=\bfzero$, then proceed to the next iteration.\\
\textbf{Case 2} : If $\mathbf{A_{j}} \neq \bfzero,$ then set $\bfr'=\mathbf{A_j}$, $\delta_t=\delta_t+2^{\gamma_t-j}$.\\
Set $\delta_t=\delta_t +1$ (as $\bfr' \in \cS(q_t(x))$) and either
proceed to find $\delta_{t-1}$ for $\bfs$ or
set $\bfs=q_t(\bfE)\bfr'$ and proceed to find $\delta_{t-1}$ for~$\bfs$.

\section{Powers of a Primitive Polynomial}
\label{sec:power_primitive}

The Games-Chan Algorithm can be generalized in a trivial way to sequences generated
by a power of a primitive polynomial. If the primitive polynomial $f(x)$
has degree $k$, then the period of such sequences is given by the following lemma.

\begin{lemma}
\label{lem:enumerate}
Let $f(x)$ be a primitive polynomial of degree $k$.
The polynomial $f(x)^m$ generates the all-zero sequence and $2^{(\ell-1)k-i}$
sequences whose period are $(2^k-1) \cdot 2^i$, for each  $1 \leq \ell \leq m$, where $i = \lceil \log \ell \rceil$.
\end{lemma}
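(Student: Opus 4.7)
The plan is to split the sequences generated by $f(x)^m$ according to their minimal polynomial, count the infinite sequences in each class, and then pass to cyclic equivalence classes by dividing by the period. Since $f(x)$ is irreducible, every divisor of $f(x)^m$ has the form $f(x)^\ell$ with $0\leq \ell\leq m$; combining \Lref{lem:gcd_pol} with the uniqueness of the minimal zero polynomial forces the minimal polynomial of any sequence generated by $f(x)^m$ to be $f(x)^\ell$ for a unique $\ell$, with $\ell=0$ giving the all-zero sequence. By \Cref{cor:gen_zero} the sequences with minimal polynomial $f(x)^\ell$ are exactly those $\bfs$ satisfying $f(\bfE)^\ell\bfs=\bfzero$ but $f(\bfE)^{\ell-1}\bfs\neq\bfzero$.

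For $\ell\geq 1$ the set of sequences killed by $f(\bfE)^\ell$ is the solution space of a linear recursion of order $\ell k$, hence an $\F_2$-vector space of dimension $\ell k$ containing $2^{\ell k}$ elements; removing the $2^{(\ell-1)k}$ sequences already killed by $f(\bfE)^{\ell-1}$ leaves
\begin{equation*}
2^{\ell k}-2^{(\ell-1)k}\,=\,2^{(\ell-1)k}(2^k-1)
\end{equation*}
sequences of minimal polynomial exactly $f(x)^\ell$.

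Next, I claim that the period of any sequence with minimal polynomial $g(x)$ equals the exponent $e$ of $g(x)$: writing $x^e-1=g(x)h(x)$ gives $(\bfE^e-1)\bfs=h(\bfE)g(\bfE)\bfs=\bfzero$, so the period divides $e$; conversely, $\bfE^P\bfs=\bfs$ means $x^P-1$ generates $\bfs$, and applying \Lref{lem:gcd_pol} to $g(x)$ and $x^P-1$ forces $g(x)\mid x^P-1$, so $e\leq P$. Specialising to $g(x)=f(x)^\ell$ with $N=2^k-1$: because $f$ is primitive, $f(x)\mid x^N-1$, and the characteristic-$2$ identity $(x^N-1)^{2^j}=x^{N\cdot 2^j}-1$ yields $f(x)^{2^j}\mid x^{N\cdot 2^j}-1$; hence $f(x)^\ell\mid x^{N\cdot 2^i}-1$ for $i=\lceil\log\ell\rceil$. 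Minimality of this exponent holds because a root $\alpha$ of $f$ has multiplicity exactly $2^j$ in the factorisation $x^{N'\cdot 2^j}-1=(x^{N'}-1)^{2^j}$, and $\alpha^{N'}=1$ forces $N\mid N'$.

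Finally, each cyclic equivalence class of a sequence of period $P$ contains exactly $P$ distinct infinite sequences, so the number of cyclic sequences with minimal polynomial $f(x)^\ell$ is
\begin{equation*}
\frac{2^{(\ell-1)k}(2^k-1)}{(2^k-1)\cdot 2^i}\,=\,2^{(\ell-1)k-i},
\end{equation*}
as claimed. The main obstacle is the exponent calculation---pinning down that $\lceil\log\ell\rceil$ is sharp requires a careful comparison of the characteristic-$2$ root multiplicities of $f(x)^\ell$ and $x^e-1$.
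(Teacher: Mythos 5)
Your proof is correct and complete. The paper itself gives no proof of this lemma (it only remarks that it generalizes the case $f(x)=x+1$ from the cited reference of Etzion and Lempel), and your argument is exactly the standard one that generalization requires: the solution space of $f(\bfE)^{\ell}\bfs=\bfzero$ has dimension $\ell k$, the period of a sequence equals the exponent of its minimal polynomial, the exponent of $f(x)^{\ell}$ is $(2^k-1)\cdot 2^{\lceil \log \ell\rceil}$ by the characteristic-$2$ multiplicity count, and dividing by the orbit size gives the number of cyclic classes. The one step worth keeping explicit in a write-up is the sharpness of the exponent (that $x^{N'}-1$ is squarefree for odd $N'$, so every root of $x^{N'2^j}-1$ has multiplicity exactly $2^j$), which you have correctly identified as the crux.
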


Lemma~\ref{lem:enumerate} is a generalization of a similar result for $f(x)=x+1$~\cite{EtLe84}.
Sequences of period $(2^k-1) \cdot 2^i$ can have
also other minimal zero polynomials and they will be discussed later,
as the linear complexity of sequences with these periods should be computed by the algorithm
presented in this paper.
At this point we will consider the linear complexity of such sequences which are generated by a polynomial
$f(x)^m$, i.e., we would like to find the minimal~$m$ for such sequences of period $(2^k-1) \cdot 2^n$.
This will be done with an algorithm which is a straightforward generalization of The Games-Chan Algorithm.

\vspace{0.2cm}

\noindent
{\bf Power of a Primitive Polynomial (PPP) Algorithm:}

\vspace{0.1cm}

The input for the algorithm is a primitive polynomial $f(x)$ of degree $k$ (whose m-sequence is~$\bfr$),
and a nonzero sequence $\bfs$ of length $(2^k-1) 2^n$, whose minimal zero polynomial is $f(\bfE)^m$.
The output of the algorithm is $m$, such that $f(\bfE)^m$ is the minimal zero polynomial of $\bfs$.

We are looking for the minimal $m$ such that $f(\bfE)^m \bfs = \bfzero$.
By Corollary~\ref{cor:before_zero}, it implies that $f(\bfE)^{m-1} \bfs = \bfr$.
Hence, if $n=0$ then $m=1$ (this will be recognized by the formal
steps of the algorithm) and the algorithm comes to its end.

Set $\bfs_n = \bfs$ and $m_n=0$ and apply the
$n$ iterations of the algorithm; at iteration $j$, $1 \leq j \leq n$,
we have a sequence $\bfs_{n-j+1} =[\cL_{n-j+1} ~ \cR_{n-j+1}]$, whose length is $(2^k-1) \cdot 2^{n-j+1}$.
We also have a variable~$m_{n-j+1}$, where initially $m_n=0$.
In iteration $j$, we perform the following computations.

Let $f(\bfE)^{2^{n-j}} \bfs_{n-j+1} = \bfs'$, where $\bfs'$ has
length $(2^k-1) \cdot 2^{n-j}$ (possibly of a smaller period)
and distinguish between two cases:

\noindent
{\bf Case 1:} If $\bfs' \neq \bfzero$, then
set $m_{n-j}=m_{n-j+1} + 2^{n-j}$, $\bfs_{n-j} = \bfs'$, and proceed to the next iteration.\\
\noindent
{\bf Case 2:} If $\bfs' = \bfzero$, then $m-1 < m_{n-j+1} + 2^{n-j}$.
Hence, set $m_{n-j}=m_{n-j+1}$ and $\bfs_{n-j} =\cL_{n-j+1}$.

After the last iteration $\bfs_0 =\bfr$, we set $m=m_0+1$, and the algorithm comes to its end.
The linear complexity of the input sequence $\bfs$ is $km$.

The PPP algorithm can be generalized for any irreducible polynomial. In fact, there is no need
to make any modification in the algorithm. The only difference is that the ${\text{m-sequence}}$~$\bfr$ is
replaced by a set of sequences $\{ \bfr_1,\bfr_2,\ldots,\bfr_t \}$ (see Corollary~\ref{cor:before_zero}), where the period of~$\bfr_i$
is equal to the exponent of the irreducible polynomial $f(x)$. After the last iteration the sequence $\bfs_0$
which is obtained is $\bfr_i$, i.e. $f(\bfE)^{m-1} \bfs = \bfr_i$, for some $1 \leq i \leq t$. But, the algorithm
does not have to know which sequence $\bfr_j$ was obtained. Hence, $\bfr_1,\bfr_2,\ldots,\bfr_t$
or the m-sequence $\bfr$ are not part of the inputs to the algorithm. Finally, note that
the complexity of the PPP algorithm depends on the primitive polynomial.

\section{Implementation for some Periods $p \cdot 2^n$, $p$ prime}
\label{sec:3power_m_2power_n}

We are now in a position to implement our method for specific lengths of sequences,
$\ell \cdot 2^n$, where $\ell$ is odd. We will consider as examples the cases in which
$\ell=p$ is a prime such that $p =3$ or $p \equiv 1 (\text{mod}~4)$, where 2 is a generator
modulo $p$, and especially $p=5$. We will concentrate on
the number of bit operations required in these cases and compare them with the number of bit operations
required by the method of Meidl~\cite{Mei08}. The explanations for the exact steps in the implementation
are long and sometimes it is required to split the sequences into a few parts
to apply some tricky computations. The exact computations are given in Appendix A and Appendix B.
We summarize the results in these appendices with the following concluding theorem.

\begin{theorem}
Let $\bfs$ be a binary sequence of length $N$ on which the algorithm is applied.
\begin{enumerate}
\item If $N= 3 \cdot 2^n$ then $7 \cdot 2^n +2n$ bit operations are required to implement the algorithm.

\item If $N= 5 \cdot 2^n$ then $16\frac{3}{4} \cdot 2^n +2n$ bit operations are required to implement the algorithm.

\item If $N = p \cdot 2^n$, where $p \equiv 1 (\text{mod}~4)$ and 2 is a generator
modulo $p$, then $\frac{p^2+7p+7}{4} \cdot 2^{n} +2n$
bit operations are required to implement the algorithm.
\end{enumerate}
\end{theorem}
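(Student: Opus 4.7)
The plan is to prove the three bit-counts by an explicit case-by-case unrolling of the algorithm of Section~\ref{sec:general} for the factorizations that arise when $N=p\cdot 2^n$. In characteristic two one has
\[
x^N-1 \;=\; (x^p-1)^{2^n} \;=\; \bigl((x+1)\,\Phi_p(x)\bigr)^{2^n},
\]
where $\Phi_p(x)=1+x+\cdots+x^{p-1}$. Under each hypothesis in the statement $\Phi_p$ is irreducible over $\F_2$: trivially for $p=3$, and by the assumption that $2$ is a generator modulo $p$ in the other cases, where $\Phi_p$ is primitive of degree $p-1$. Thus in the notation of Section~\ref{sec:general} we take $t=2$, $q_1(x)=x+1$, $q_2(x)=\Phi_p(x)$, and $\gamma_1=\gamma_2=n$.

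Next I would run the subroutine of Section~\ref{sec:general} twice: first to locate $\delta_2$ for $q_2$, then $\delta_1$ for $q_1$. The recursion uses the Frobenius identity $q_i(\bfE)^{2^j}=q_i(\bfE^{2^j})$, valid in characteristic two, so $(\bfE+1)^{2^j}=\bfE^{2^j}+1$ costs one XOR per position of the working sequence, while $\Phi_p(\bfE)^{2^j}=\sum_{i=0}^{p-1}\bfE^{\,i\cdot 2^j}$ costs $p-1$ XORs per position. As in the Games-Chan algorithm, each iteration either halves the period of the working sequence (Case~$2$) or keeps the period fixed while replacing the sequence by its left half (Case~$1$), so the active length decays geometrically from $p\cdot 2^n$ down to $p$. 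After $\delta_2$ is determined, the residual sequence $q_2(\bfE)\bfr'$ has period dividing $2^n$, and $\delta_1$ is extracted by running plain Games-Chan on it, contributing the additive term $2n$ together with a linear-in-length sequence cost.

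The remaining work is bookkeeping. I would partition the total cost into (i) the setup for $\delta_2$, where rather than applying $q_1(\bfE)^{2^n}$ naively, the appendix decomposes $\bfs$ into its $p$ interleaved subsequences of period $2^n$ to avoid redundant XORs; (ii) the $n$ recursive applications of $q_2(\bfE)^{2^{n-j}}$, where at stage $j$ the per-position cost is $p-1$ and the active length is about $p\cdot 2^{n-j+1}$, yielding a geometric series dominated by $(p-1)\,p\cdot 2^n$; and (iii) the $n$ recursive steps for $\delta_1$, which telescope similarly and absorb a $p\cdot 2^n$ initial term. Collecting constants gives $(p^2+7p+7)/4\cdot 2^n+2n$ for the generic case $p\equiv 1\pmod{4}$, specializing to $67/4\cdot 2^n+2n$ for $p=5$; the case $p=3$ is handled separately, its smaller constant $7$ reflecting that the degree-$2$ factor $\Phi_3$ permits partial XORs to be reused between iterations.

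The main obstacle will be the fine-grained accounting in Case~$2$, where the working sequence is overwritten and the active period halves: one must show that the worst-case bit count matches the stated bound for every branch pattern of Cases~$1$ and~$2$, and that the interleaving trick used to cut the naive $(p-1)\,p\cdot 2^n$ setup down to the required leading constant is consistent with the downstream geometric accounting. The fractional coefficient $3/4$ that appears at $p=5$, and more generally the exact form of $(p^2+7p+7)/4$, arises precisely from this tight balance, so any looseness in the estimate would spoil the stated constants.
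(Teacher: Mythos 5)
Your skeleton matches the paper's: factor $x^{p\cdot 2^n}-1=\bigl((x+1)\Phi_p(x)\bigr)^{2^n}$ with $\Phi_p$ irreducible, split the work into step (A) (find the exponent of $\Phi_p(\bfE)$ after killing the $(x+1)$-part) and step (B) (Games--Chan on $\Phi_p(\bfE)^{2^n}\bfs$), and run the PPP-style halving recursion using $\Phi_p(\bfE)^{2^j}=\Phi_p(\bfE^{2^j})$. However, there is a genuine gap in the quantitative heart of the argument, and it is exactly where you yourself flag the "main obstacle." Your own accounting of the recursion for $\delta_2$ --- per-position cost $p-1$ on a working length that starts at $p\cdot 2^n$ --- gives a geometric series of order $(p-1)p\cdot 2^n$, i.e.\ a leading constant of roughly $p^2$, which is about four times the claimed $\tfrac{p^2+7p+7}{4}$. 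You then assert that "collecting constants" yields the stated bound, but nothing in the proposal closes that factor of $4$. The paper closes it with specific structural lemmas that you do not supply: for $p=3$, Theorem~\ref{complexityreduction3} shows the working sequence can be kept in the form $[\mathbf{X}+\mathbf{Y},\ \mathbf{Y}+\mathbf{Z},\ \mathbf{Z}+\mathbf{X}]$, that applying $(\bfE^2+\bfE+1)^{2^{n-k}}$ preserves this form and costs only $3\cdot 2^{n-k}$ (not $2\cdot 3\cdot 2^{n-k}$) bit operations, that the zero test needs only two thirds of the positions, and that in the zero branch the half-length representative is read off with \emph{no} computation; for general $p$, Theorems~\ref{firstiteration} and~\ref{seconditeration} make the first two iterations cost only $O(p\cdot 2^n)$ (linear, not quadratic, in $p$), so that the full $(p-1)p\cdot 2^{n-m}$ cost is paid only from iteration $m=3$ onward, where it sums to $(p-1)p\cdot 2^{n-2}=\tfrac{(p-1)p}{4}\cdot 2^n$. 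Without these lemmas the stated constants cannot be reached.

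Two further inaccuracies: the "interleaving into $p$ subsequences of period $2^n$" you invoke for the setup is Meidl's reduction \cite{Mei08}, which the paper explicitly does \emph{not} follow (and which gives worse constants, e.g.\ $20\cdot 2^n$ vs.\ $16\tfrac34\cdot 2^n$ for $p=5$); the paper instead works with the image sequence $\bfr=(\bfE+1)^{2^n}\bfs$ directly in its structured form. And in step (B) the paper's count $2\cdot 2^n+n$ for $p=3$ depends on reusing partial sums ($\mathbf{X'}+\mathbf{Y'}$ and $\mathbf{Z'}$) already computed during step (A); your proposal treats the two steps as independent, which again loses the stated constant. So the approach is the right one, but the proof as proposed does not establish the theorem: the representation-preserving lemmas and the cross-step reuse of intermediate results are the actual content, and they are missing.
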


As for comparison with the method of Meidl~\cite{Mei08}, one can verify that for binary sequences of length $3 \cdot 2^n$,
the implementation requires $8 \cdot 2^n +4n$ bit operations, while for sequences of length $5 \cdot 2^n$,
the implementation requires $20 \cdot 2^n + 10n$ bit operations. For general length $\ell \cdot 2^n$ it is too
difficult to compare since it is difficult to compute the exact number of bit operations required in the algorithm of~\cite{Mei08}.
We just note that in some case the algorithm in~\cite{Mei08} is more efficient, e.g. for $N=7 \cdot 2^n$.

\section{Algorithm for Sequences with Odd Period}
\label{sec:odd_period}

When we are given a binary sequence of odd period, the situation is more complicated
and our algorithm is not efficient in many cases.
We consider two cases in which our algorithm is efficient for a binary sequence $\bfs$
of odd period $\ell$.

\noindent
{\bf Case 1:} $\ell = p^n$, $p$ prime, 2 is a generator modulo $p$.
It is well known that the polynomial
$\sum_{j=0}^{p-1} x^j$ is irreducible if and only if 2 is a generator modulo $p$.
For any given integer $i \geq 0$ the polynomial $\sum_{j=0}^{p-1} x^{j \cdot p^i}$ is also
irreducible in this case~\cite{LiNi}.

Now, one can verify that $x^{p^n}-1$ is
a multiplication of $n+1$ irreducible polynomials, where the $i$th polynomial
is $g_i(x) = \sum_{j=0}^{p-1} x^{j p^i}$, $0 \leq i \leq n-1$, i.e.
$x^{p^n} -1 = (x+1) \prod_{i=0}^{n-1} g_i(x)$.
Our algorithm can be made now simpler and more efficient.
%
%
%
%
%
The input sequence $\bfs$ has period at most $p^n$.
Initialize $\bfs'=\bfs$, $c=0$, and $f(x)=1$,
where $c$ will be the linear complexity of the
sequence $\bfs$ and $f(x)$ will be the minimal zero polynomial of $\bfs$.
Consider the $m$th iteration of the algorithm, ${1 \leq m \leq n}$.
At each iteration we have to check if $\bfs'$ whose length is $p^{n-m+1}$ has period $p^{n-m+1}$.
For this, it is sufficient to check if the first $(p-1)p^{n-m}$ bits
of $\bfr \eqdef (\bfE^{p^{n-m}}+1) \bfs'$ are equal to $\mathbf{0}$ using $(p-1)p^{n-m}$ bit operations.

\begin{enumerate}
\item If $\bfr=\mathbf{0}$, then replace $\bfs'$ with the first $p^{n-m}$ bits of $\bfs'$.

\item If $\bfr \neq \mathbf{0}$, then
set $c=c+(p-1)p^{n-m}$, $f(x)=g_{n-m}(x)f(x)$ and $\bfs'=g_{n-m}(\bfE)\bfs'$.
\end{enumerate}
After the $n$ iterations if $\bfs'=1$, then replace $c$ with $c+1$ and $f(x)$ with $(x+1)f(x)$.

Each one of the two steps in the $n$ iterations requires $(p-1)p^{n-m}$ bit operations.
The total number of bit operations required by the algorithm (omitting the additions to compute~$c$) is
$
1+\sum_{m=1}^{n}{2(p-1)p^{n-m}}=1+(2p^n-2) \leq 2p^n = 2N.
$

\noindent
{\bf Case 2:} $\ell = p_1^{n_1} p_2^{n_2} ~ \cdots ~ p_t^{n_t}$, $p_i$ prime, 2 is a generator modulo $p_i$, $n_i$ is a positive integer.
This case is solved similarly to Case~1, by considering $t$ steps, one for each prime.


\vspace{-0.07cm}

\section*{Acknowledgment}
Y. M. Chee and H. M. Kiah were supported by the Singapore Ministry of Education under
Grant MOE2015-T2-2-086.
J. Chrisnata and T. Etzion were supported by the ISF grant no. 222/19.

\section*{Appendix A}

This appendix considers the number of bit operations required to compute the linear complexity
of binary sequences whose period is $3 \cdot 2^n$.

\begin{theorem}
\label{complexityreduction3}
Let $\bfr$ be a binary cyclic sequence of the form
$\begin{bmatrix} \mathbf{X}+\mathbf{Y} & \mathbf{Y}+\mathbf{Z}& \mathbf{Z}+\mathbf{X}\end{bmatrix}$
for some binary sequences $\mathbf{X}, \mathbf{Y}, \mathbf{Z}$ of length
$2^{n-k+1}$. If $\bfs = \begin{bmatrix}
\mathbf{X} & \mathbf{Y} & \mathbf{Z}
\end{bmatrix}$ is the binary input sequence, then the following properties hold.
\begin{enumerate}
\item[(P1)] There exist binary sequences $\mathbf{X'},\mathbf{Y'},\mathbf{Z'}$ of length $2^{n-k}$,
such that $(\bfE^2+\bfE+1)^{2^{n-k}}\bfr=[\mathbf{X'}+\mathbf{Y'}, \mathbf{Y'}+\mathbf{Z'}, \mathbf{Z'}+\mathbf{X'}]$,
\item[(P2)] Obtaining the binary sequences $\mathbf{X'},\mathbf{Y'},\mathbf{Z'}$ from (P1) requires at most $3 \cdot 2^{n-k}$ bit operations,
\item[(P3)] If $(\bfE^2+\bfE+1)^{2^{n-k}}\bfr=\mathbf{0}$, then there exist binary sequences
$\mathbf{X''},\mathbf{Y''},\mathbf{Z''}$ of length $2^{n-k}$, such that
$\bfr=[\mathbf{X''}+\mathbf{Y''}, \mathbf{Y''}+\mathbf{Z''}, \mathbf{Z''}+\mathbf{X''}]$,
\item[(P4)] Obtaining the binary sequences $\mathbf{X''},\mathbf{Y''},\mathbf{Z''}$ from (P3) requires no computation.
\end{enumerate}
\end{theorem}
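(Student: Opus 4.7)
The plan is to apply the Frobenius identity $(x^{2}+x+1)^{2^{j}} = x^{2^{j+1}}+x^{2^{j}}+1$ over $\F_{2}$, so the operator $(\bfE^{2}+\bfE+1)^{2^{n-k}}$ reduces to $\bfE^{2L}+\bfE^{L}+1$ where $L := 2^{n-k}$. First I would split each of the length-$2L$ blocks $X, Y, Z$ into two halves of length $L$, writing $X=[X_{1},X_{2}]$, $Y=[Y_{1},Y_{2}]$, $Z=[Z_{1},Z_{2}]$, so that $\bfr$ decomposes into six consecutive length-$L$ blocks $[X_{1}+Y_{1},\, X_{2}+Y_{2},\, Y_{1}+Z_{1},\, Y_{2}+Z_{2},\, Z_{1}+X_{1},\, Z_{2}+X_{2}]$. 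Applying $\bfE^{2L}+\bfE^{L}+1$ then amounts to XOR-ing $\bfr$ with its cyclic shifts by $L$ and $2L$ positions; a direct block-wise calculation shows the output is $[A, B, C, A, B, C]$ with $A=X_{1}+X_{2}+Y_{2}+Z_{1}$, $B=X_{2}+Y_{1}+Z_{1}+Z_{2}$, and $C=X_{1}+Y_{1}+Y_{2}+Z_{2}$.

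For (P1), I would note that $A+B+C=\bfzero$ holds termwise in $\F_{2}^{L}$, so the system $X'+Y'=A$, $Y'+Z'=B$, $Z'+X'=C$ is consistent and the required length-$L$ sequences $X', Y', Z'$ exist. For (P2), rather than computing $A$, $B$, $C$ separately and back-solving (which would exceed the budget), I propose the specific solution $X' := P_{1}+P_{2}$, $Y' := Q_{1}$, $Z' := R_{2}$, where $P=X+Y$, $Q=Y+Z$, $R=Z+X$ are the length-$2L$ blocks of $\bfr$ and $P_{i}, Q_{i}, R_{i}$ are their length-$L$ halves. A substitution verifies the three defining equations, and since $Y'$ and $Z'$ are already present as halves of blocks of $\bfr$, only the XOR $P_{1}+P_{2}$ must be performed, costing $L = 2^{n-k}$ bit operations, well within the allotted $3 \cdot 2^{n-k}$.

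For (P3), the hypothesis $(\bfE^{2L}+\bfE^{L}+1)\bfr=\bfzero$ forces $A=B=C=\bfzero$, which after unwinding is equivalent to the three block identities $P_{1}=Q_{2}$, $P_{2}=R_{1}$, $Q_{1}=R_{2}$; these say exactly that the last three length-$L$ blocks of $\bfr$ repeat the first three, so $\bfr$ has period dividing $3L=3 \cdot 2^{n-k}$ and is completely determined by its first three blocks $[P_{1}, P_{2}, Q_{1}]$. For (P4), I propose the trivial solution $X'' := P_{1}$, $Y'' := \bfzero$, $Z'' := P_{2}$: the equations $X''+Y''=P_{1}$ and $Y''+Z''=P_{2}$ hold by construction, while $Z''+X''=P_{1}+P_{2}=Q_{1}$ is precisely the condition $A=\bfzero$ supplied by the hypothesis. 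Since $P_{1}$ and $P_{2}$ are already present in $\bfr$ and $\bfzero$ is trivially available, no bit operations are expended.

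The main obstacle I foresee is meeting the tight bit-operation budget in (P2): the triple $(X', Y', Z')$ is only determined up to a common additive constant in $\F_{2}^{L}$, so there is an entire coset of valid choices, and most of them would require three or more length-$L$ XORs. The saving hinges on noticing that the trinomial $\bfE^{2L}+\bfE^{L}+1$ and the \emph{difference} structure of $\bfr=[X+Y, Y+Z, Z+X]$ cooperate so that $P_{1}+P_{2}$, $Q_{1}$, $R_{2}$ already form a valid triple, reducing the work to a single length-$L$ XOR.
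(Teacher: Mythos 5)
Your core algebra is sound and follows the same route as the paper: the Frobenius identity reduces the operator to $\bfE^{2L}+\bfE^{L}+1$ with $L=2^{n-k}$, the halves $\mathbf{X_1},\mathbf{X_2},\dots,\mathbf{Z_2}$ give a six-block form of $\bfr$, and the block-wise sum yields $[A,B,C,A,B,C]$ with exactly the paper's $A=\mathbf{X_1}+\mathbf{X_2}+\mathbf{Y_2}+\mathbf{Z_1}$, $B=\mathbf{X_2}+\mathbf{Y_1}+\mathbf{Z_1}+\mathbf{Z_2}$, $C=\mathbf{X_1}+\mathbf{Y_1}+\mathbf{Y_2}+\mathbf{Z_2}$; your verifications of the existence claims (P1) and (P3) are correct. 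The genuine gap is in the cost claims (P2) and (P4). The theorem designates $\bfs=[\mathbf{X}~\mathbf{Y}~\mathbf{Z}]$ as the input, so what is available is the triple $(\mathbf{X},\mathbf{Y},\mathbf{Z})$ and not the sequence $\bfr$ itself; the whole purpose of (P1)--(P4) in the downstream $7\cdot2^n+2n$ analysis is that $\bfr$ is \emph{never} materialized (building $\bfr=[\mathbf{X}+\mathbf{Y},\mathbf{Y}+\mathbf{Z},\mathbf{Z}+\mathbf{X}]$ would already cost $3\cdot2^{n-k+1}$ operations at each level and destroy the budget). Your representatives $\mathbf{X'}=P_1+P_2$, $\mathbf{Y'}=Q_1$, $\mathbf{Z'}=R_2$, with $P=\mathbf{X}+\mathbf{Y}$, $Q=\mathbf{Y}+\mathbf{Z}$, $R=\mathbf{Z}+\mathbf{X}$, are cheap only under the assumption that the blocks of $\bfr$ are stored. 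From the actual input they cost $3\cdot2^{n-k}$, $2^{n-k}$ and $2^{n-k}$ respectively, i.e.\ $5\cdot2^{n-k}$ in total, which exceeds the bound $3\cdot2^{n-k}$ you must prove. The same objection defeats (P4): $P_1=\mathbf{X_1}+\mathbf{Y_1}$ and $P_2=\mathbf{X_2}+\mathbf{Y_2}$ are not free from the input, so your choice $\mathbf{X''}=P_1$, $\mathbf{Y''}=\bfzero$, $\mathbf{Z''}=P_2$ would cost $2\cdot2^{n-k}$ operations rather than none.

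The repair is to choose, within the coset of valid triples (you correctly note they are determined up to a common additive constant), the representative whose components are each a sum of only \emph{two} input halves: the paper takes $\mathbf{X'}=\mathbf{X_1}+\mathbf{Y_2}$, $\mathbf{Y'}=\mathbf{X_2}+\mathbf{Z_1}$, $\mathbf{Z'}=\mathbf{Y_1}+\mathbf{Z_2}$, which differs from your triple by the common constant $\mathbf{X_2}+\mathbf{Y_1}$ and costs exactly $3\cdot2^{n-k}$ bit operations from the given halves; and for (P3)/(P4) it takes $\mathbf{X''}=\mathbf{Y_1}$, $\mathbf{Y''}=\mathbf{X_1}$, $\mathbf{Z''}=\mathbf{Z_1}$, which are literally halves of the input and hence genuinely free. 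Note also that the output of each iteration must again be an explicitly held triple so that the theorem can be re-applied at the next level; your choices hand back a triple expressed through $\bfr$, so even granting access to $\bfr$ at one level the recursion would not close without additional, uncounted work.
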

\begin{proof}
Suppose that $\mathbf{X}, \mathbf{Y}, \mathbf{Z}$ are binary sequences of length $2^{n-k+1},$ and $\bfr=[\mathbf{X}+\mathbf{Y}, \mathbf{Y}+\mathbf{Z}, \mathbf{Z}+\mathbf{X}]$ is a binary sequence of length $3 \cdot 2^{n-k+1}$. Let $\mathbf{X}=[\mathbf{X_1} \mathbf{X_2}],\mathbf{Y}=[\mathbf{Y_1} \mathbf{Y_2}], \mathbf{Z}=[\mathbf{Z_1} \mathbf{Z_2}]$ where $\mathbf{X_1},\mathbf{X_2} ,\mathbf{Y_1},\mathbf{Y_2},\mathbf{Z_1},\mathbf{Z_2}$ are of length $2^{n-k}$.
Then we have the following equalities for $(\bfE^2+\bfE+1)^{2^{n-k}}\bfr = (\bfE^{2^{n-k+1}}+\bfE^{2^{n-k}}+1)\bfr$
(where variables on the same column in the related matrices are supposed to be added together),
\begin{align*}
(\bfE^2+\bfE+1)^{2^{n-k}}\bfr
= \begin{bmatrix}
\mathbf{X_1}+\mathbf{Y_1} & \mathbf{X_2}+\mathbf{Y_2} & \mathbf{Y_1}+\mathbf{Z_1} & \mathbf{Y_2}+\mathbf{Z_2} & \mathbf{Z_1}+\mathbf{X_1} & \mathbf{Z_2}+\mathbf{X_2} \\
\mathbf{X_2}+\mathbf{Y_2} & \mathbf{Y_1}+\mathbf{Z_1} & \mathbf{Y_2}+\mathbf{Z_2} & \mathbf{Z_1}+\mathbf{X_1} &	\mathbf{Z_2}+\mathbf{X_2} & \mathbf{X_1}+\mathbf{Y_1} \\
\mathbf{Y_1}+\mathbf{Z_1} & \mathbf{Y_2}+\mathbf{Z_2} & \mathbf{Z_1}+\mathbf{X_1} & \mathbf{Z_2}+\mathbf{X_2} & \mathbf{X_1}+\mathbf{Y_1} & \mathbf{X_2}+\mathbf{Y_2}
\end{bmatrix}\\
=  \begin{bmatrix}
\mathbf{Z_1}+\mathbf{X_1} & \mathbf{Z_2}+\mathbf{X_2} & \mathbf{X_1}+\mathbf{Y_1} & \mathbf{X_2}+\mathbf{Y_2} & \mathbf{Y_1}+\mathbf{Z_1} & \mathbf{Y_2}+\mathbf{Z_2}\\
\mathbf{X_2}+\mathbf{Y_2} & \mathbf{Y_1}+\mathbf{Z_1} & \mathbf{Y_2}+\mathbf{Z_2} & \mathbf{Z_1}+\mathbf{X_1} &	\mathbf{Z_2}+\mathbf{X_2} & \mathbf{X_1}+\mathbf{Y_1}
\end{bmatrix}
\end{align*}
$$
=  \begin{bmatrix}
\mathbf{Z_1}+\mathbf{X_1} & \mathbf{Z_2}+\mathbf{X_2} & \mathbf{X_1}+\mathbf{Y_1}\\
\mathbf{X_2}+\mathbf{Y_2} & \mathbf{Y_1}+\mathbf{Z_1} & \mathbf{Y_2}+\mathbf{Z_2}
\end{bmatrix},
$$
.

Now, if we let \begin{align*}
\mathbf{X'}&=\mathbf{Y_2}+\mathbf{X_1},\\
\mathbf{Y'}&=\mathbf{X_2}+\mathbf{Z_1},\\
\mathbf{Z'}&=\mathbf{Y_1}+\mathbf{Z_2},
\end{align*}
then $(\bfE^2+\bfE+1)^{2^{n-k}}\mathbf{s}=[\mathbf{X'}+\mathbf{Y'}, \mathbf{Y'}+\mathbf{Z'}, \mathbf{Z'}+\mathbf{X'}]$, where $\mathbf{X'},\mathbf{Y'},\mathbf{Z'}$ are binary sequences of length $2^{n-k}$, which implies (P1).
	
Furthermore, $\mathbf{X'},\mathbf{Y'},\mathbf{Z'}$ can be computed using $3 \cdot 2^{n-k}$ bit operations since $\mathbf{X_1},\mathbf{X_2},\mathbf{Y_1},\mathbf{Y_2},\mathbf{Z_1},\mathbf{Z_2}$ are given as inputs of length $2^{n-k}$,
which implies (P2).

If $(\bfE^2+\bfE+1)^{2^{n-k}}\bfr=\mathbf{0}$, then by (P1) we have that
\begin{align*}
(\bfE^2+\bfE+1)^{2^{n-k}}\bfr
&=  \begin{bmatrix}
\mathbf{Z_1}+\mathbf{X_1} & \mathbf{Z_2}+\mathbf{X_2} &
\mathbf{X_1}+\mathbf{Y_1}\\ \mathbf{X_2}+\mathbf{Y_2} & \mathbf{Y_1}+\mathbf{Z_1} & \mathbf{Y_2}+\mathbf{Z_2}
\end{bmatrix}=\mathbf{0}, \nonumber
\end{align*}
which implies that
\begin{align*}
\mathbf{Z_1}+\mathbf{X_1} &=\mathbf{X_2}+\mathbf{Y_2},\\
\mathbf{Z_2}+\mathbf{X_2} &=	\mathbf{Y_1}+\mathbf{Z_1}, \\
\mathbf{X_1}+\mathbf{Y_1} &=\mathbf{Y_2}+\mathbf{Z_2}.
\end{align*}
Therefore,
\begin{align*}
\bfr&=\begin{bmatrix}\mathbf{X}+\mathbf{Y} & \mathbf{Y}+\mathbf{Z} & \mathbf{Z}+\mathbf{X}\end{bmatrix}\\
&=\begin{bmatrix}\mathbf{X_1}+\mathbf{Y_1}  & \mathbf{X_2}+\mathbf{Y_2} & \mathbf{Y_1}+\mathbf{Z_1}
& \mathbf{Y_2}+\mathbf{Z_2}& \mathbf{Z_1}+\mathbf{X_1} & \mathbf{Z_2}+\mathbf{X_2}\end{bmatrix}\\
&=\begin{bmatrix}\mathbf{X_1}+\mathbf{Y_1}  & \mathbf{X_1}+\mathbf{Z_1} & \mathbf{Y_1}+\mathbf{Z_1}\end{bmatrix}\\
&= \begin{bmatrix}
\mathbf{X''}+\mathbf{Y''} & \mathbf{Y''}+\mathbf{Z''}& \mathbf{Z''}+\mathbf{X''}
\end{bmatrix},
\end{align*}
where $\mathbf{X''}=\mathbf{Y_1}, \mathbf{Y''}=\mathbf{X_1}, \mathbf{Z''}=\mathbf{Z_1}$
are all given inputs which require no computation and hence (P3) and (P4) are implied.
\end{proof}

\begin{theorem}
Let $\bfs$ be a binary cyclic sequence of length $3 \cdot 2^n$ for some nonnegative integer $n$.
The computation of the linear complexity of $\bfs$ requires at most $7 \cdot 2^n +2n$ bit operations.
\end{theorem}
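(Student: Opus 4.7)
The plan is to apply the general method of Section~\ref{sec:general} to the factorization
\[
x^{3\cdot 2^n}-1 \;=\; (x+1)^{2^n}\,(x^2+x+1)^{2^n}
\]
over~$\F_2$. Under this factorization, the minimal zero polynomial of $\bfs$ has the form $(x+1)^{\delta_1}(x^2+x+1)^{\delta_2}$ with $0\le\delta_1,\delta_2\le 2^n$, and the linear complexity equals $\delta_1+2\delta_2$. I will compute $\delta_1$ and $\delta_2$ separately, using The Games--Chan Algorithm for $\delta_1$ and iterated application of Theorem~\ref{complexityreduction3} for $\delta_2$, and then sum the costs.

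Writing $\bfs=[\bfs_0,\bfs_1,\bfs_2]$ with each $\bfs_i$ of length $2^n$, I observe that
\[
\bfr_2 \eqdef (\bfE^{2^n}+1)\,\bfs \;=\; [\bfs_0+\bfs_1,\;\bfs_1+\bfs_2,\;\bfs_2+\bfs_0]
\]
has precisely the form treated by Theorem~\ref{complexityreduction3}; since $\delta_1\le 2^n$, its minimal polynomial is $(x^2+x+1)^{\delta_2}$. Likewise
\[
(\bfE^{2^{n+1}}+\bfE^{2^n}+1)\,\bfs \;=\; [\bft,\,\bft,\,\bft], \qquad \bft\eqdef \bfs_0+\bfs_1+\bfs_2,
\]
has minimal polynomial $(x+1)^{\delta_1}$, so $\bft\in\F_2^{2^n}$ has linear complexity $\delta_1$. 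The pair $(\bfr_2,\bft)$ can be produced from $\bfs$ using $3\cdot 2^n$ bit operations: two length-$2^n$ XORs yield $\bfs_0+\bfs_1$ and $\bfs_1+\bfs_2$ (the third block of $\bfr_2$ being their sum), and one further length-$2^n$ XOR yields $\bft=(\bfs_0+\bfs_1)+\bfs_2$.

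Applying Theorem~\ref{thm:comp_GaCh} (The Games--Chan Algorithm) to $\bft$ returns $\delta_1$ in at most $2^n+n$ bit operations. To extract $\delta_2$, I iterate Theorem~\ref{complexityreduction3} starting from $\bfr_2$: at iteration $k=1,\dots,n$, the current sequence has the form $[X+Y,Y+Z,Z+X]$ of length $3\cdot 2^{n-k+1}$, and I invoke (P2) to compute the image under $(\bfE^2+\bfE+1)^{2^{n-k}}$ in $3\cdot 2^{n-k}$ bit operations. If the image is nonzero then by (P1) it is again of the required form of length $3\cdot 2^{n-k}$ and becomes the new working sequence, and I add $2^{n-k}$ to a running accumulator $a$; if the image is zero, (P3)--(P4) produce an equivalent length-$3\cdot 2^{n-k}$ representation of the current sequence at no computational cost, with $a$ unchanged. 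An easy induction on $k$, using Corollary~\ref{cor:divisibility}, shows that the accumulator $a$ plus the $(x^2+x+1)$-exponent of the minimal polynomial of the current working sequence always equals $\delta_2$; since the working sequence remains nonzero throughout (assuming $\bfr_2\ne\bfzero$), the final length-$3$ sequence lies in $\cS(x^2+x+1)$, and $\delta_2=a+1$. The iterative phase therefore costs $\sum_{k=1}^n 3\cdot 2^{n-k}=3\cdot 2^n-3$ bit operations, plus at most $n$ further bit operations to maintain $a$ (each of its updates sits at a distinct power of~$2$, as in the Games--Chan analysis).

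Summing the four contributions gives
\[
3\cdot 2^n \;+\; (2^n+n) \;+\; (3\cdot 2^n-3) \;+\; n \;=\; 7\cdot 2^n+2n-3 \;\le\; 7\cdot 2^n+2n,
\]
as claimed. The edge cases $\bft=\bfzero$ (whence $\delta_1=0$) and $\bfr_2=\bfzero$ (whence $\delta_2=0$ and the $\delta_2$-iteration is skipped entirely) are absorbed by this bound. The main subtlety will be to verify the $\delta_2$-iteration invariant rigorously: the working sequence always retains the structural form required by Theorem~\ref{complexityreduction3} (branches (P1) and (P3) supplying the two cases), and the relation between $a$, the residual $(x^2+x+1)$-exponent, and the final correction $a+1$ falls out of Corollary~\ref{cor:divisibility}.
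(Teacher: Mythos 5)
Your decomposition is the same as the paper's: factor $x^{3\cdot 2^n}-1=(x+1)^{2^n}(x^2+x+1)^{2^n}$, find $\delta_1$ by running Games--Chan on $\bfs_0+\bfs_1+\bfs_2$, and find $\delta_2$ by iterating Theorem~\ref{complexityreduction3} in PPP style. The structural claims (that $(\bfE^{2^n}+1)\bfs$ has minimal polynomial $(x^2+x+1)^{\delta_2}$, that $(\bfE^2+\bfE+1)^{2^n}\bfs=[\bft,\bft,\bft]$ with minimal polynomial $(x+1)^{\delta_1}$, and that $c(\bfs)=\delta_1+2\delta_2$) are all correct. The gap is in the operation count for the $\delta_2$ phase. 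Property (P2) gives you the triple $(\mathbf{X'},\mathbf{Y'},\mathbf{Z'})$ in $3\cdot 2^{n-k}$ bit operations, but it does \emph{not} tell you whether $[\mathbf{X'}+\mathbf{Y'},\,\mathbf{Y'}+\mathbf{Z'},\,\mathbf{Z'}+\mathbf{X'}]$ is $\bfzero$; deciding that requires comparing $\mathbf{X'}$ with $\mathbf{Y'}$ and $\mathbf{Y'}$ with $\mathbf{Z'}$, an additional $2\cdot 2^{n-k}$ bit operations under the paper's cost model (which counts comparisons, as in the Games--Chan analysis). This test is not optional bookkeeping: it drives the branch between (P1)/(P2) and (P3)/(P4) and hence the accumulator update, so each iteration genuinely costs $5\cdot 2^{n-k}$, and the iterative phase costs about $5\cdot 2^n$ rather than $3\cdot 2^n-3$. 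Adding your upfront $3\cdot 2^n$ and the $2^n+n$ for Games--Chan, your method as described totals roughly $9\cdot 2^n+2n$, which overshoots the claimed bound by $2\cdot 2^n$.

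To recover $7\cdot 2^n+2n$ you need the paper's two savings. First, never materialize $\bfr_2$: Theorem~\ref{complexityreduction3} operates entirely on the representation $(\mathbf{X},\mathbf{Y},\mathbf{Z})$, so the explicit XORs producing the blocks of $\bfr_2$ are wasted work, and step (A) then costs exactly $\sum_{k}5\cdot 2^{n-k}+n\le 5\cdot 2^n+n$. Second, do not form $\bft=\bfs_0+\bfs_1+\bfs_2$ from scratch ($2\cdot 2^n$ operations) before running Games--Chan; instead observe that the first Games--Chan step on $[\mathbf{X}+\mathbf{Y}+\mathbf{Z}]$ produces $\mathbf{X'}+\mathbf{Y'}+\mathbf{Z'}=(\mathbf{X'}+\mathbf{Y'})+\mathbf{Z'}$, and both $\mathbf{X'}+\mathbf{Y'}$ and $\mathbf{Z'}$ are already available from the first iteration of step (A); only if that sum vanishes must you additionally form $\mathbf{X_1}+\mathbf{Y_1}+\mathbf{Z_1}$. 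This caps step (B) at $3\cdot 2^{n-1}+2^{n-1}+n=2\cdot 2^n+n$ and yields the stated total.
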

\begin{proof}
Since $\bfs$ is a binary cyclic sequence of length $3 \cdot 2^n$, we have that $(\bfE^{3 \cdot {2^n}}+1)\bfs=\mathbf{0}$.
The polynomial $\bfE^{3 \cdot 2^n}+1$ is factorized into irreducible polynomials, i.e., $\bfE^{3 \cdot 2^n}+1=(\bfE^2+\bfE+1)^{2^n}(\bfE+1)^{2^n}$.
To compute the linear complexity of $\bfs$, it is required to find the smallest $i$ and $j$,
such that $(\bfE^2+\bfE+1)^i(\bfE+1)^j \bfs=\mathbf{0}$. Computing the linear complexity using our algorithm can be done in two steps:
\begin{enumerate}[label=(\Alph*)]
\item Find the smallest $i$ such that $(\bfE^2+\bfE+1)^i(\bfE+1)^{2^n} \bfs=\mathbf{0}$. \label{i}
\item Find the smallest $j$ such that $(\bfE+1)^j(\bfE^2+\bfE+1)^{2^n} \bfs =\mathbf{0}$. \label{j}
\end{enumerate}
Let $\bfs=\begin{bmatrix}
\mathbf{X} & \mathbf{Y} & \mathbf{Z}
\end{bmatrix}$ be the input sequence and consider first the computations for step \ref{i}. Let
$\bfr=(\bf\bfE+1)^{2^n} \bfs=\begin{bmatrix}
\mathbf{X}+\mathbf{Y} & \mathbf{Y}+\mathbf{Z}& \mathbf{Z}+\mathbf{X}\end{bmatrix}$.
First, we initialize $\bfs'$ to be $\bfr$, and at the $k$-th iteration of PPP algorithm,
we compute $(\bfE^2+\bfE+1)^{2^{n-k}}\bfs'$. But instead of computing it,
from Theorem~\ref{complexityreduction3} (P1), we can just compute $\mathbf{X'},\mathbf{Y'},\mathbf{Z'}$ such that $(\bfE^2+\bfE+1)^{2^{n-k}}\bfs'=[\mathbf{X'}+\mathbf{Y'},
\mathbf{Y'}+\mathbf{Z'}, \mathbf{Z'}+\mathbf{X'}]$, which requires $3 \cdot 2^{n-k}$ bit operations.\\
\\
Now, to verify whether $\begin{bmatrix}
\mathbf{X'}+\mathbf{Y'}& \mathbf{Y'}+\mathbf{Z'}& \mathbf{Z'}+\mathbf{X'}
\end{bmatrix}=\mathbf{0}$ observe that $\begin{bmatrix}
\mathbf{X'}+\mathbf{Y'}& \mathbf{Y'}+\mathbf{Z'}& \mathbf{Z'}+\mathbf{X'}
\end{bmatrix}=\mathbf{0}$ if and only if
$\begin{bmatrix} \mathbf{X'}+\mathbf{Y'} & \mathbf{Y'}+\mathbf{Z'}\end{bmatrix}=\mathbf{0}$.
Therefore only $2 \cdot 2^{n-k}$ bit operations are required to compute
$\mathbf{X'}+\mathbf{Y'}$ and $\mathbf{Y'}+\mathbf{Z'}$. If the computation leads to a nonzero result, then
we proceed to the next iteration having
$\begin{bmatrix}
\mathbf{X'}+\mathbf{Y'}& \mathbf{Y'}+\mathbf{Z'}& \mathbf{Z'}+\mathbf{X'}
\end{bmatrix}$ as the new $\bfs'$. Otherwise, if the computation leads to a zero result,
then by (P3) of Theorem~\ref{complexityreduction3}, we set $\begin{bmatrix}
\mathbf{X''}+\mathbf{Y''}& \mathbf{Y''}+\mathbf{Z''}& \mathbf{Z''}+\mathbf{X''}
\end{bmatrix}$
as the new $\bfs'$ which requires no further computation. In total, for the $k$-th iteration,
the algorithm requires at most $5 \cdot 2^{n-k}$ bit operations. Therefore, to perform step \ref{i},
the algorithm (together with the integer additions) requires at most $5 \cdot 2^n+n$ bit operations.\\
\\
Next, we consider the number of bit operations required to perform step \ref{j}. Note that $(\bfE^2+\bfE+1)^{2^n} \bfs=\begin{bmatrix}
\mathbf{X}+\mathbf{Y}+\mathbf{Z}
\end{bmatrix}$, which is a binary sequence of length $2^n$. We will proceed with The Games-Chan Algorithm,
where in the first iteration we compute $(\mathbf{Y_2}+\mathbf{X_1})+(\mathbf{X_2}+\mathbf{Z_1})+(\mathbf{Y_1}+\mathbf{Z_2})=\mathbf{X'}+\mathbf{Y'}+\mathbf{Z'}$.
Since $\begin{bmatrix} \mathbf{X'}+\mathbf{Y'}\end{bmatrix}$ and $\mathbf{Z'}$
are already computed in step \ref{i} we only need $2^{n-1}$ bit operations for the
first iteration. If the computation leads to $\mathbf{0}$,
then before the second iteration, we compute the first $2^{n-1}$ bits of $\begin{bmatrix}
\mathbf{X}+\mathbf{Y}+\mathbf{Z}
\end{bmatrix}$,
namely $\begin{bmatrix}
\mathbf{X_1}+\mathbf{Y_1}+\mathbf{Z_1}
\end{bmatrix}$, which requires $2 \cdot 2^{n-1}$ bit operations. Otherwise, if the computation leads to nonzero, we proceed to the second iteration with $\begin{bmatrix}
\mathbf{X'}+\mathbf{Y'}+\mathbf{Z'}
\end{bmatrix}$ of length $2^{n-1}$ which is already computed. Hence, the first iteration
requires at most $3 \cdot 2^{n-1}$ bit operations and
at most $2^{n-2}+2^{n-3}+ \cdots +1 \leq 2^{n-1}$ bit operations are required
from the second iteration to the last iteration.
Therefore, step \ref{j} of the algorithm (together with the integer additions) requires
at most $2 \cdot 2^{n}+n$ bit operations.
\\
Thus, in total, the algorithm requires at most $7 \cdot 2^n+2n$ bit operations.
\end{proof}

\section*{Appendix B}

Let $p$ be a prime such that $2$ is a generator in the multiplicative group modulo $p$, where $p = 4\ell+1$,
for some positive integer $\ell$. Let $\bfs$ be the binary cyclic input sequence
of length $p \cdot 2^n$, which implies that ${(\bfE^{p \cdot {2^n}}+1)\bfs=\mathbf{0}}$.
The polynomial $\bfE^{p \cdot 2^n}+1$ is factorized into irreducible polynomials, i.e.,
${\bfE^{p \cdot 2^n}+1=(\bfE^{p-1}+ \cdots +\bfE^2+\bfE +1)^{2^n}(\bfE+1)^{2^n}}$.
To compute the linear complexity of $\bfs$, it is required to find the smallest $i$ and $j$,
such that
$$
{(\bfE^{p-1}+ \cdots +\bfE^2+\bfE +1)^i(\bfE+1)^j \bfs=\mathbf{0}}~.
$$
Computing the linear complexity using our algorithm can be done in two steps:
\begin{enumerate}[label=(\Alph*)]
\item Find the smallest $i$ such that $(\bfE^{p-1}+ \cdots +\bfE^2+\bfE +1)^i(\bfE+1)^{2^n} \bfs=\mathbf{0}$, \label{first}
\item Find the smallest $j$ such that $(1+\bfE)^j(\bfE^{p-1}+ \cdots +\bfE^2+\bfE +1)^{2^n} \bfs=\mathbf{0}$. \label{second}
\end{enumerate}
Similar to theorem~\ref{complexityreduction3} one can prove the following two theorems.
\begin{theorem}
Let $\bfs$ be a binary cyclic sequence of length $p \cdot 2^k$, where $p$ is a prime such that $2$ is a generator in
the multiplicative group modulo $p$, and $k \geq 0$. If $\bfs$ is the binary input sequence, then the following properties hold:
\begin{enumerate}
\item $(\bfE^{p-1}+ \cdots +\bfE^2+\bfE +1)^{2^{k-1}}(\bfE+1)^{2^k}\bfs = (\bfE+1)^{2^{k-1}} \bfs'$,
for some binary sequence $\bfs'$ of length $p \cdot 2^{k-1}$;
\item the computation of $\bfs'$ requires $p \cdot 2^{k-1}$ bit operations;
\item  $(\bfE+1)^{2^{k-1}}\bfs'=\mathbf{0}$ if and only if
the first $(p-1) \cdot 2^{k-1}$ bits of $(\bfE+1)^{2^{k-1}}\bfs'$ equal to $\mathbf{0}$.
Hence, only $(p-1) \cdot 2^{k-1}$ bit operations are required to check if
$(\bfE+1)^{2^{k-1}}\bfs'=\mathbf{0}$.
\end{enumerate}
\label{firstiteration}
\end{theorem}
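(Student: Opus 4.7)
My plan is to exploit the Frobenius-type identity $f(x)^{2^j} = f(x^{2^j})$ over $\F_2$ to reduce the composite operator $(\bfE^{p-1}+\cdots+\bfE+1)^{2^{k-1}}(\bfE+1)^{2^k}$ to a very simple form. Setting $m = 2^{k-1}$, the first factor becomes $\sum_{j=0}^{p-1}\bfE^{jm}$ and the second becomes $\bfE^{2m}+1$. Expanding the product over $\F_2$, every term $\bfE^{2m},\bfE^{3m},\ldots,\bfE^{(p-1)m}$ appears exactly twice and cancels, leaving
\[
\left(\sum_{j=0}^{p-1}\bfE^{jm}\right)(\bfE^{2m}+1) \;=\; 1+\bfE^m+\bfE^{pm}+\bfE^{(p+1)m} \;=\; (\bfE^m+1)(\bfE^{pm}+1).
\]
Since $(\bfE^m+1) = (\bfE+1)^{2^{k-1}}$, the combined operator factors as $(\bfE+1)^{2^{k-1}}(\bfE^{pm}+1)$.

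Given that reduction, I would define $\bfs'$ by applying $\bfE^{pm}+1$ to $\bfs$. Writing $\bfs = [\bfA,\bfB]$ as two halves of length $pm$, we get $(\bfE^{pm}+1)\bfs = [\bfA+\bfB,\bfA+\bfB]$, which is the repetition of the length-$pm$ sequence $\bfs' \triangleq \bfA+\bfB$. The identification of this period-$pm$ sequence with $\bfs'$ yields property 1 immediately from the operator identity above, and computing $\bfA+\bfB$ costs exactly $pm = p\cdot 2^{k-1}$ XORs, giving property 2.

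For property 3, I would decompose $\bfs'=[\bfs'_0,\ldots,\bfs'_{p-1}]$ into $p$ blocks of length $m = 2^{k-1}$. Then $(\bfE+1)^{2^{k-1}}\bfs' = (\bfE^m+1)\bfs'$ is the cyclic length-$pm$ sequence whose $i$-th block is $\bfs'_i+\bfs'_{(i+1)\bmod p}$. Requiring the first $(p-1)m$ bits to vanish forces $\bfs'_0 = \bfs'_1 = \cdots = \bfs'_{p-1}$, whence the final block $\bfs'_{p-1}+\bfs'_0$ is automatically zero. Hence checking the first $(p-1)m$ bits is equivalent to the full check, at a cost of $(p-1)\cdot 2^{k-1}$ XORs.

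I do not foresee a real obstacle; the proof is essentially a polynomial identity plus block bookkeeping. The only point requiring a small amount of care is the clean identification of $(\bfE^{pm}+1)\bfs$ (naturally a length-$2pm$ cyclic sequence that happens to have period dividing $pm$) with its length-$pm$ quotient $\bfs'$, so that the subsequent operator $(\bfE+1)^{2^{k-1}}$ applied in the theorem statement can be interpreted on the shorter sequence. Once this is observed, the telescoping calculation for the operator identity is the main content and is short, so the bulk of the write-up consists of verifying the bit-count claims in properties 2 and 3.
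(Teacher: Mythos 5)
Your proof is correct. Note that the paper does not actually write out a proof of this theorem: it only remarks that it can be proved ``similar to Theorem~\ref{complexityreduction3}'', whose proof (the $p=3$ case) proceeds by splitting $\mathbf{X},\mathbf{Y},\mathbf{Z}$ into half-blocks and cancelling terms in an explicit matrix layout. Your route is genuinely different and cleaner: with $m=2^{k-1}$, the factorization $\bigl(\sum_{j=0}^{p-1}\bfE^{jm}\bigr)(\bfE^{2m}+1)=(\bfE^{m}+1)(\bfE^{pm}+1)$ — which is just the identity $\bfE^{pm}+1=(\bfE^{m}+1)\sum_{j}\bfE^{jm}$ multiplied by one extra factor of $\bfE^{m}+1$ — immediately exhibits $\bfs'$ as the sum $\bfA+\bfB$ of the two halves of $\bfs$, giving properties 1 and 2 at once; one can check that for $p=3$ your $\bfs'=\bfA+\bfB$ coincides with the paper's $[\mathbf{Y_2}+\mathbf{X_1},\ \mathbf{X_2}+\mathbf{Z_1},\ \mathbf{Y_1}+\mathbf{Z_2}]$, so the two constructions agree and the operator identity buys you the general-$p$ statement without any block bookkeeping. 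Your argument for property 3 (vanishing of the first $p-1$ block-differences forces all $p$ blocks of $\bfs'$ to be equal, so the last difference is zero for free) is the intended one. The only care points, which you correctly flag, are the identification of the period-$p\cdot 2^{k-1}$ doubled sequence $(\bfE^{pm}+1)\bfs$ with its length-$p\cdot 2^{k-1}$ quotient, and the implicit assumption $k\geq 1$ (the exponent $2^{k-1}$ is not an integer for $k=0$, so the theorem's ``$k\geq 0$'' must be read as $k\geq 1$); neither is a gap in your argument.
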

\begin{theorem}
Let $\bfs$ be a binary sequence of length $p \cdot 2^k$, where $p$ is a prime such
that $2$ is a generator in the multiplicative group modulo $p$, $p=4\ell+1$ for some positive integer $\ell$.
and $k \geq 0$.
Suppose that $(\bfE^{p-1}+ \cdots +\bfE^2+\bfE +1)^{2^{k-1}}(\bfE+1)^{2^k}\bfs=\mathbf{0}$,
then given $\bfs$ as input, the following things hold,
\begin{enumerate}
\item If $\bfr=(\bfE+1)^{2^k}\bfs$, then the first half of $\bfr$ is the same as the
second half of $\bfr$, which means that $\bfr$ is a cyclic sequence of period at most $p \cdot 2^{k-1}$.
\item $(\bfE^{p-1}+ \cdots +\bfE^2+\bfE +1)^{2^{k-2}}(\bfE+1)^{2^k}\bfs$ can be computed using $(3p+1) \cdot 2^{k-2}$ bit operations.
\end{enumerate}
\label{seconditeration}
\end{theorem}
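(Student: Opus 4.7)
The plan is to prove the two parts separately, with part~1 following from an algebraic identity and part~2 generalizing the block-level calculation in the proof of Theorem~\ref{complexityreduction3} to the prime $p$.

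For part~1, I would exploit the factorization $(x+1)(1+x+\cdots+x^{p-1}) = x^p+1$ in $\mathbb{F}_2[x]$. Writing $g(x) = 1+x+\cdots+x^{p-1}$ and using the freshman's dream in characteristic~2, raising this identity to the power $2^{k-1}$ yields $(x+1)^{2^{k-1}} g(x)^{2^{k-1}} = x^{p \cdot 2^{k-1}} + 1$. Let $\bfr = (\bfE+1)^{2^k}\bfs$. The hypothesis gives $g(\bfE)^{2^{k-1}}\bfr = \bfzero$; multiplying by $(\bfE+1)^{2^{k-1}}$ we obtain $(\bfE^{p \cdot 2^{k-1}}+1)\bfr = \bfzero$, so $\bfr$ has period dividing $p \cdot 2^{k-1}$---equivalently, the first half of $\bfr$ agrees with its second half.

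For part~2, I would mimic the block-matrix calculation in the proof of Theorem~\ref{complexityreduction3}, generalizing from three blocks to $p$. Partition $\bfs = [\mathbf{X}_0, \ldots, \mathbf{X}_{p-1}]$ with each $\mathbf{X}_i$ of length $2^k$, so that $\bfr = (\bfE+1)^{2^k}\bfs$ takes the triangle form $[\mathbf{X}_0+\mathbf{X}_1, \ldots, \mathbf{X}_{p-1}+\mathbf{X}_0]$. Part~1 lets us restrict to the first half of $\bfr$, that is, to $p$ blocks of length $2^{k-1}$, which we split further into quarter-blocks of length $2^{k-2}$. In characteristic~2, $g(\bfE)^{2^{k-2}} = 1 + \bfE^{2^{k-2}} + \bfE^{2\cdot 2^{k-2}} + \cdots + \bfE^{(p-1) 2^{k-2}}$, so computing $g(\bfE)^{2^{k-2}}\bfr$ at quarter-block granularity reduces to sums of $p$ consecutive quarter-block entries of the triangle sequence. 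These sums telescope across the $\mathbf{X}_i$-blocks, leaving output quarter-blocks that are compact linear combinations of only a few quarter-blocks of the $\mathbf{X}_i$'s. The output is then expressed in a structured form analogous to the triangle form of $\bfr$, and the XORs are counted carefully: certain intermediate quarter-block sums are computed once and reused across several output positions (sliding-window updates), while others must be formed from scratch. A detailed accounting yields the $(3p+1) \cdot 2^{k-2}$ bound.

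The main obstacle is the exact bit-operation count in part~2. Part~1 is a one-line identity once the factorization $(\bfE+1)g(\bfE) = \bfE^p+1$ is in hand, but the constant $3p+1$ in part~2 is sensitive to precisely which intermediate quarter-block sums are stored and reused, and how the half-period structure from part~1 is exploited in tandem with the triangle decomposition. Hitting this specific constant, rather than a weaker bound obtained by a naive sliding-window count, is the delicate part, and the case analysis is expected to mirror the explicit block computations carried out for $p=3$ in the proof of Theorem~\ref{complexityreduction3}.
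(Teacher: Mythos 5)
Your proof of part~1 is complete and correct: with $g(x)=1+x+\cdots+x^{p-1}$ the identity $(\bfE+1)g(\bfE)=\bfE^{p}+1$ raised to the power $2^{k-1}$ turns the hypothesis $g(\bfE)^{2^{k-1}}\bfr=\bfzero$ into $(\bfE^{p\cdot 2^{k-1}}+1)\bfr=\bfzero$, which is exactly the statement that the two halves of $\bfr$ agree. Be aware, though, that the paper gives no proof of this theorem at all --- it only asserts that one can argue ``similar to Theorem~\ref{complexityreduction3}'' --- so your part~2 has to stand entirely on its own, and as written it does not.

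The gap is precisely the one you flag yourself: the constant $3p+1$. Your plan (quarter-blocks of length $2^{k-2}$, telescoping of the two shifted window sums, sliding-window reuse) is the right framework, but when one actually carries it out it lands above the claimed bound. Concretely, writing $\bfs$ as $4p$ quarter-blocks $\bfs_0,\dots,\bfs_{4p-1}$ and setting $v_m=\bfs_m+\bfs_{m+p}$, the factorization $g(\bfE)^{2^{k-2}}(\bfE+1)^{2^{k}}=(\bfE^{p2^{k-2}}+1)(\bfE^{2^{k-2}}+1)^{3}$ shows that the $m$-th output quarter-block is $v_m+v_{m+1}+v_{m+2}+v_{m+3}$, and only $p$ output quarter-blocks are needed because applying $(\bfE+1)^{2^{k-2}}$ to $g(\bfE)^{2^{k-2}}$ times the output (which is $\bfzero$ by hypothesis) shows the output has period at most $p\cdot 2^{k-2}$. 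The natural count is then $(p+3)$ quarter-block XORs for $v_0,\dots,v_{p+2}$, plus $3$ for the first output block, plus $2(p-1)$ for the sliding updates, i.e.\ $(3p+4)\cdot 2^{k-2}$ bit operations, and the other orderings you hint at (computing the first half of $\bfr$ explicitly first, or pairing the $v$'s) give $5p-3$ or worse. Closing the last three quarter-block operations requires an ingredient your sketch does not identify --- either reusing quantities already formed in the preceding iteration (the sequence $\bfs'=(\bfE^{p2^{k-1}}+1)\bfs$ and the partial sums checked there, none of which obviously coincide with the $v_m$), or exploiting the hypothesis at block level (it forces the even-indexed and odd-indexed quarter-blocks of one period of $\bfr$ each to sum to $\bfzero$, so some length-$p$ windows can be traded for shorter complementary sums). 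Until that bookkeeping is pinned down, you have proved the statement only with a weaker constant of the same order, not the $(3p+1)\cdot 2^{k-2}$ that the subsequent complexity theorem of Appendix~B depends on.
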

\begin{theorem}
Step \ref{first} requires at most $\frac{p^2+7p-1}{4} \cdot 2^{n}$ bit operations.
\end{theorem}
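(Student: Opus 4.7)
My plan is to unroll the algorithm for Step~(A) across its $n$ iterations and sum the per-iteration costs using Theorems~\ref{firstiteration} and~\ref{seconditeration}. Writing $k := n-j+1$, iteration $j$ applies $q(\bfE)^{2^{k-1}}$ to the current state, where $q(x) = x^{p-1}+\cdots+x+1$. I would maintain the loop invariant that after every nonzero iteration the state sits in the nice form $(\bfE+1)^{2^{k}}\bfs^{*}$ for some $\bfs^{*}$ of length $p\cdot 2^{k}$; under this invariant Theorem~\ref{firstiteration} applies and charges $p\cdot 2^{k-1}$ bit operations to produce the reduced sequence $\bfs^{**}$ and $(p-1)\cdot 2^{k-1}$ bit operations for the zero-check, a per-iteration cost of $(2p-1)\cdot 2^{k-1}$.

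When an iteration returns zero the invariant is broken, so I would invoke Theorem~\ref{seconditeration}, which simultaneously certifies that $(\bfE+1)^{2^{k}}\bfs^{*}$ has period at most $p\cdot 2^{k-1}$ and directly delivers the value $q(\bfE)^{2^{k-2}}(\bfE+1)^{2^{k}}\bfs^{*}$ for $(3p+1)\cdot 2^{k-2}$ bit operations. Because that output is exactly what iteration $j+1$ has to inspect, its main computation is absorbed into the seconditeration call; then the factorisation $(\bfE+1)^{2^{k}}=(\bfE+1)^{2^{k-1}}(\bfE+1)^{2^{k-1}}$ lets me recast the output as $(\bfE+1)^{2^{k-1}}\bfs^{**}$, restoring the invariant at level $k-1$ so the algorithm continues.

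Summing then gives a Theorem~\ref{firstiteration} stream bounded by $\sum_{j=1}^{n}(2p-1)\cdot 2^{n-j}$ and a Theorem~\ref{seconditeration} stream bounded by $\sum_{j=1}^{n-1}(3p+1)\cdot 2^{n-j-1}$, plus $O(n)$ bookkeeping operations that are dominated. The main obstacle is the sharpening needed to hit exactly $\tfrac{p^2+7p-1}{4}\cdot 2^{n}$: in a cascade of several consecutive zero iterations, Theorem~\ref{seconditeration} has to be re-applied at each new level, and at every re-application one must verify that the form required by the next call (either Theorem~\ref{firstiteration} or another Theorem~\ref{seconditeration}) is available without re-paying a full $\Theta(p\cdot 2^{n-j})$ per event. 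Pinning down this combined accounting — and showing that the re-factorisation overhead matches the quadratic-in-$p$ leading coefficient $\tfrac{p^2}{4}$ rather than a naive linear-in-$p$ estimate — is the technical heart of the argument.
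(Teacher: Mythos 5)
There is a genuine gap, and it sits exactly where you flag it. Your plan is to maintain the invariant ``the current state has the form $(\bfE+1)^{2^{k}}\bfs^{*}$'' throughout all $n$ iterations, applying Theorem~\ref{firstiteration} at cost $(2p-1)\cdot 2^{k-1}$ per iteration and Theorem~\ref{seconditeration} after each zero outcome. But the restoration step after a zero iteration is not justified: when $q(\bfE)^{2^{k-1}}\bfr=\mathbf{0}$ the algorithm passes to the first $p\cdot 2^{k-1}$ bits of $\bfr=(\bfE+1)^{2^{k}}\bfs^{*}$, and the factorisation $(\bfE+1)^{2^{k}}=(\bfE+1)^{2^{k-1}}(\bfE+1)^{2^{k-1}}$ does not carry over to the truncated sequence, because $\bfE$ acts cyclically with respect to the \emph{new} period $p\cdot 2^{k-1}$ while $(\bfE+1)^{2^{k-1}}\bfs^{*}$ was computed with wrap-around at $p\cdot 2^{k}$. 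This is precisely why Theorem~\ref{seconditeration} only hands you the output of \emph{one} further iteration (at cost $(3p+1)\cdot 2^{k-2}$) rather than re-establishing the invariant. Your own summation, $\sum_j (2p-1)2^{n-j}+\sum_j(3p+1)2^{n-j-1}\approx\frac{7p-1}{2}\cdot 2^{n}$, is linear in $p$ and cannot equal the claimed bound; there is no mechanism in your argument that produces the $\frac{p^{2}}{4}\cdot 2^{n}$ term, and you leave exactly this as ``the technical heart.''

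The paper resolves this differently: the implicit representation via Theorems~\ref{firstiteration} and~\ref{seconditeration} is used only for the first \emph{two} iterations (costing $(2p-1)\cdot 2^{n-1}+(3p+1)\cdot 2^{n-2}+p\cdot 2^{n-2}=(8p-1)\cdot 2^{n-2}$ including the conversion to an explicit sequence $\mathbf{z}$ of length $p\cdot 2^{n-2}$). From the third iteration onward the algorithm abandons the implicit form and directly computes the first $p\cdot 2^{n-m}$ bits of $q(\bfE)^{2^{n-m}}\mathbf{z}$ at cost $(p-1)\cdot p\cdot 2^{n-m}$ per iteration; summing the geometric series over $m\geq 3$ gives $(p-1)p\cdot 2^{n-2}$, which is the source of the quadratic leading coefficient, and $(8p-1)\cdot 2^{n-2}+(p^{2}-p)\cdot 2^{n-2}=\frac{p^{2}+7p-1}{4}\cdot 2^{n}$. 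So to complete a proof you must either adopt this two-phase accounting, or actually prove the invariant restoration you sketch --- which, if true, would yield a strictly stronger (linear-in-$p$) bound than the theorem asserts and would contradict the way the paper itself budgets the later iterations.
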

\begin{proof}
Let $\bfr=(\bfE+1)^{2^{n}} \bfs$. To perform step \ref{first} we will follow our PPP algorithm on the sequence $\bfr$.
\begin{enumerate}
\item[(1)] The first iteration of our PPP algorithm computes $(\bfE^{p-1}+ \cdots +\bfE^2+\bfE +1)^{2^{n-1}}\bfr$
to check if it is equal to the zero sequence. However, it follows from Theorem \ref{firstiteration},
that when $k=n$ the computation of $(\bfE^{p-1}+ \cdots +\bfE^2+\bfE +1)^{2^{n-1}}\bfr$ is not required.
Instead, we compute a binary sequence $\bfs'$ of length $p \cdot 2^{n-1}$,
where $(\bfE^{p-1} + \cdots +\bfE^2 +\bfE +1)^{2^{n-1}}\bfr = (\bfE+1)^{2^{n-1}} \bfs'$,
and the first $(p-1) \cdot 2^{n-1}$ bits of $(\bfE+1)^{2^{n-1}}\bfs'$, which require $(2p-1) \cdot 2^{n-1}$
bit operations for the first iteration. If $(\bfE+1)^{2^{n-1}}\bfs'$
is computed to be $\mathbf{0}$, then go to step (2); otherwise, go to step~(3).
		
\item[(2)] The second iteration of our algorithm computes $(\bfE^{p-1}+ \cdots +\bfE^2+\bfE +1)^{2^{n-2}}\bfr$.
It follows from Theorem~\ref{seconditeration}, that when $k=n$, it requires $(3p+1) \cdot 2^{n-2}$ bit operations.
If $(\bfE^{p-1}+ \cdots +\bfE^2+\bfE +1)^{2^{n-2}}\bfr$ is equal to $\mathbf{0}$,
then go to step~(4); otherwise, go to step~(5).
		
\item[(3)] The second iteration of our algorithm computes $(\bfE^{p-1}+ \cdots +\bfE^2+\bfE +1)^{2^{n-2}}(\bfE+1)^{2^{n-1}} \bfs'$.
However, it follows again from Theorem \ref{firstiteration}, that when $k=n-1$, the second
iteration of the algorithm can be done by computing a binary sequence $\bfs''$ of length $p \cdot 2^{n-2}$, where
$(\bfE^{p-1}+ \cdots +\bfE^2+\bfE +1)^{2^{n-2}}(\bfE+1)^{2^{n-1}} \bfs' = (1+\bfE)^{2^{n-2}} \bfs''$,
and the first $(p-1) \cdot 2^{n-2}$ bits of $(\bfE+1)^{2^{n-2}}\bfs''$, which require $(2p-1) \cdot 2^{n-2}$
bit operations in total for the second iteration. If $(\bfE+1)^{2^{n-2}}\bfs''$ is computed to be $\mathbf{0}$,
then go to step~(6); otherwise, go to step~(7).
		
\item[(4)] Now we have $(\bfE^{p-1}+ \cdots +\bfE^2+\bfE +1)^{2^{n-1}}\mathbf{r}=\mathbf{0}$ from step~(1)
and $(\bfE^{p-1}+ \cdots +\bfE^2+\bfE +1)^{2^{n-2}} \bfr=\mathbf{0}$ from step~(2).
It implies by Theorem \ref{seconditeration} that $\bfr$ is a sequence of period
at most $p \cdot 2^{n-1}$. Furthermore, note that
$(\bfE^{p \cdot 2^{n-2}}+1)\bfr=(\bfE^p +1)^{2^{n-2}}\bfr
=(\bfE+1)^{2^{n-2}}(\bfE^{p-1}+ \cdots +\bfE^2+\bfE +1)^{2^{n-2}}\bfr=(\bfE+1)^{2^{n-2}}\mathbf{0}=\mathbf{0}$.
Since $\bfr$ is a sequence of period at most $p \cdot 2^{n-1}$, then $(\bfE^{p \cdot 2^{n-2}}+1)\bfr=\mathbf{0}$
implies that $\bfr$ is of period at most $p \cdot 2^{n-2}$. Let $\mathbf{z}$
be the first $p \cdot 2^{n-2}$ bits of $\bfr$, which takes $p \cdot 2^{n-2}$ bit operations to compute. Go to step~(8).
		
\item[(5)] Let $\mathbf{z}$ be the first $p \cdot 2^{n-2}$ bits of
$(\bfE^{p-1}+ \cdots +\bfE^2+\bfE +1)^{2^{n-2}}\bfr$ which is already computed
in step~(2). Go to step~(8).
		
\item[(6)] From step~(3), we have $(\bfE^{p-1}+ \cdots +\bfE^2+\bfE +1)^{2^{n-2}}(\bfE+1)^{2^{n-1}} \bfs'=\mathbf{0}$.
Then it follows from Theorem \ref{seconditeration}, that when $k=n-1$,
$(\bfE+1)^{2^{n-1}}\bfs'$ is a binary sequence of period at most $p \cdot 2^{n-2}$.
Let $\mathbf{z}$ be the first $p \cdot 2^{n-2}$ bits of $(\bfE+1)^{2^{n-1}}\bfs'$
whose computation requires $p \cdot 2^{n-2}$ bit operations, and go to step~(8).
		
\item[(7)] Let $\mathbf{z}$ be the first $p \cdot 2^{n-2}$ bits of $(\bfE+1)^{2^{n-2}}\bfs''$.
Since the first $(p-1) \cdot 2^{n-2}$ bits of $(1+E)^{2^{n-2}}\bfs''$ has been computed in step~(3),
we just need to compute the last $2^{n-2}$ which requires $2^{n-2}$ bit operations. Go to step~(8).
		
\item[(8)] For any $m \geq 3$, the $m$-th iteration of our algorithm starts with a binary sequence $\mathbf{z}$
of length $p \cdot 2^{n-m+1}$, and compute the first $p \cdot 2^{n-m}$ bits of
$(\bfE^{p-1}+ \cdots +\bfE^2+\bfE +1)^{2^{n-m}}\mathbf{z}$ which requires $(p-1) \cdot p \cdot 2^{n-m}$ bit operations.
If it is nonzero then we replace $\mathbf{z}$ with the computed sequence,
otherwise we replace $\mathbf{z}$ with the first $p \cdot 2^{n-m}$ bits of $\mathbf{z}$. Repeat step~(8).
\end{enumerate}
To perform the first and the second iteration, namely step~(1) through step~(7), it requires at most
$(2p-1) \cdot 2^{n-1} +(3p+1) \cdot 2^{n-2}+p \cdot 2^{n-2}=(8p-1) \cdot 2^{n-2}$ bit operations.\\
To perform the third iteration onwards, namely step~(8),
it requires at most $(p-1) \cdot p \cdot 2^{n-2}$ bit operations.
Therefore, in total, we require at most $\frac{p^2+7p-1}{4} \cdot 2^{n}$ bit operations to perform step \ref{first}.
\end{proof}
\begin{theorem}
Step \ref{second} requires at most $2^{n+1}$ bit operations.
\end{theorem}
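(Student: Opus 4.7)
The plan is to recognize step~\ref{second} as The Games-Chan Algorithm on a sequence of effective length $2^n$, and to charge almost all of the set-up cost to partial sums already produced in step~\ref{first}.

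First, I would reduce to length $2^n$. Set $\mathbf{t} := u(\bfE)^{2^n}\bfs$ where $u(x) := x^{p-1} + \cdots + x + 1$. Then
\[
(\bfE^{2^n}+1)\mathbf{t} \;=\; (\bfE+1)^{2^n}u(\bfE)^{2^n}\bfs \;=\; (\bfE^p+1)^{2^n}\bfs \;=\; (\bfE^{p \cdot 2^n}+1)\bfs \;=\; \mathbf{0},
\]
so $\mathbf{t}$ has period dividing $2^n$. Writing $\bfs = [\mathbf{X}_0 \,|\, \mathbf{X}_1 \,|\, \cdots \,|\, \mathbf{X}_{p-1}]$ with blocks of length $2^n$, the characteristic-two Frobenius identity applied to $u(\bfE)^{2^n} = \bfE^{(p-1) \cdot 2^n} + \cdots + \bfE^{2^n} + 1$ gives one period of $\mathbf{t}$ as $\mathbf{X}_0 + \mathbf{X}_1 + \cdots + \mathbf{X}_{p-1}$. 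Step~\ref{second} is therefore Games-Chan on this length-$2^n$ sequence, whose intrinsic cost is at most $2^n + n$ bit operations by Theorem~\ref{thm:comp_GaCh}.

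Second, I would exhibit the first Games-Chan iteration on $\mathbf{t}$ as essentially free given the outputs of step~\ref{first}. By Theorem~\ref{firstiteration}, step~\ref{first} has computed the length-$p \cdot 2^{n-1}$ sequence $\bfs' = [\bfs'_0 \,|\, \cdots \,|\, \bfs'_{p-1}]$ together with the first $(p-1) \cdot 2^{n-1}$ bits of $(\bfE+1)^{2^{n-1}}\bfs'$, i.e.\ the pair sums $\bfs'_i + \bfs'_{i+1}$ for $0 \le i \le p-2$. Subdividing each block as $\mathbf{X}_j = [\mathbf{A}_{2j} \,|\, \mathbf{A}_{2j+1}]$ with $|\mathbf{A}_i| = 2^{n-1}$, one checks that $\bfs'_i = \mathbf{A}_i + \mathbf{A}_{i+p}$ and therefore
\[
L(\mathbf{t}) + R(\mathbf{t}) \;=\; \sum_{i=0}^{2p-1}\mathbf{A}_i \;=\; \sum_{i=0}^{p-1}\bfs'_i,
\]
which can be assembled from the already-available pair sums with at most $\lceil (p-1)/2 \rceil$ further XORs of length $2^{n-1}$. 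This generalizes the identification $(X'+Y')+Z' = X'+Y'+Z'$ used in Appendix~A for $p=3$.

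Third, I would handle the two branches. In the nonzero branch, Games-Chan proceeds on the resulting length-$2^{n-1}$ sequence and contributes at most $2^{n-1} + 2^{n-2} + \cdots + 1 < 2^n$ further bit operations in aggregate. In the zero branch, Theorem~\ref{seconditeration} tells us that $\mathbf{t}$ itself has period dividing $2^{n-1}$, and the input required for the next Games-Chan iteration on $\mathbf{t}$ can be expressed — one level deeper — as an analogous short combination of the $\bfs''$ produced at the second iteration of step~\ref{first}, at cost of order $2^{n-2}$. Iterating this recursive reuse matches each Games-Chan iteration on $\mathbf{t}$ with the step~\ref{first} iterate that has already done the bulk of the work, and summing the geometrically shrinking contributions together with the intrinsic Games-Chan cost gives a worst-case total of $2^{n+1}$ bit operations.

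The main obstacle is the bookkeeping in the repeated ``zero-branch'' scenario: one has to verify inductively that the sequence needed at iteration $k$ of Games-Chan on $\mathbf{t}$ is a linear combination of length-$2^{n-k}$ quantities already computed during step~\ref{first}'s $k$-th iteration, with the overhead at level $k$ bounded by a constant multiple of $2^{n-k}$ so that the telescoping sum stays within the $2^{n+1}$ budget. This is a direct and uniform generalization of the explicit identification of $X'+Y'+Z'$ and $X_1+Y_1+Z_1$ in Appendix~A, carried out for arbitrary primes $p \equiv 1 \pmod 4$ with $2$ a generator modulo $p$; the hypotheses enter only through the irreducibility of $u(x)$ (which keeps the step~\ref{first} structure intact) and through the identity $(x+1)u(x) = x^p + 1$ that powers the Frobenius reductions above.
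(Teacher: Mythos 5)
Your overall strategy---fold $\bfs$ down to the length-$2^n$ sequence $\mathbf{t}=u(\bfE)^{2^n}\bfs=\mathbf{X}_0+\cdots+\mathbf{X}_{p-1}$, run The Games-Chan Algorithm on it, and pay for the set-up by reusing partial sums from step \ref{first}---is the same idea the paper uses (and is exactly how Appendix~A handles $p=3$, where $X'+Y'$ and $Z'$ are recycled so that the first iteration costs only $2^{n-1}$). The paper's own proof is far terser: it simply asserts that computing $\bfr=u(\bfE)^{2^n}\bfs$ costs $2^n$ bit operations and that Games-Chan on a length-$2^n$ sequence costs another $2^n$, giving $2^{n+1}$.

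The genuine gap in your version is the arithmetic of the reuse. You assemble $L(\mathbf{t})+R(\mathbf{t})=\sum_{i=0}^{p-1}\bfs'_i$ from the available pair sums at a cost of $\lceil (p-1)/2\rceil$ XORs of length-$2^{n-1}$ vectors, i.e.\ $\tfrac{p-1}{4}\cdot 2^{n}$ bit operations. This is \emph{not} one of the ``geometrically shrinking contributions'': it grows linearly in $p$, and already for $p=13$ it equals $3\cdot 2^{n}$, exceeding the entire $2^{n+1}$ budget before a single genuine Games-Chan iteration has been paid for. (It closes only for $p=5$.) There is also a structural obstruction to doing better along this route: each block of $(\bfE+1)^{2^n}\bfs$ is a sum of an even number of the $\mathbf{X}_j$, so no XOR-combination of those blocks can equal the odd-weight combination $\sum_{j=0}^{p-1}\mathbf{X}_j$; some $\Theta(p)$ block additions involving $\bfs$ itself (or the $\bfs'_i$) seem unavoidable in your bookkeeping. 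So either the theorem's bound must be read as absorbing a $p$-dependent constant (which is how the paper's one-line ``requires $2^n$ bit operations'' should probably be audited as well), or you need a different charging scheme; as written, your first-iteration overhead breaks the claimed $p$-independent bound. Secondarily, your argument leans on an explicit form $\bfs'_i=\mathbf{A}_i+\mathbf{A}_{i+p}$ for the output of Theorem~\ref{firstiteration} and on an inductive zero-branch reuse via Theorem~\ref{seconditeration}, both of which are asserted rather than verified; these would need to be proved for the telescoping to be legitimate.
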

\begin{proof}
To perform step \ref{second}, our algorithm first computes $\bfr=(\bfE^{p-1}+ \cdots +\bfE^2+\bfE +1)^{2^{n}} \bfs$,
which requires $2^n$ bit operations. Then, we perform The Games-Chan algorithm on the sequence $\bfr$
to find its complexity, which requires $2^n$ bit operations. Therefore, in total,
at most $2 \cdot 2^n$ bit operations are required to perform step \ref{second}.
\end{proof}

\begin{corollary}
Let $\bfs$ be a binary cyclic sequence of length $N=p \cdot 2^n$, where $p=4\ell+1$, and $p$ is a generator
in the multiplicative group modulo $p$. The algorithm requires $\frac{p^2+7p+7}{4} \cdot 2^{n}$ bit operations
to compute the complexity of $\bfs$.
\end{corollary}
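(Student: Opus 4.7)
The plan is to derive this corollary as a routine aggregation of the two bit-operation counts established in the two preceding theorems of Appendix B. Since the algorithm for period $N=p\cdot 2^n$ (with $p=4\ell+1$ and $2$ a generator modulo $p$) consists of exactly the two steps \ref{first} and \ref{second} into which the factorization $\bfE^{p\cdot 2^n}+1 = (\bfE^{p-1}+\cdots+\bfE+1)^{2^n}(\bfE+1)^{2^n}$ splits the computation, the total bit-operation count is obtained simply by adding the upper bounds for these two steps.

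First I would quote the previous theorem stating that step \ref{first} uses at most $\frac{p^2+7p-1}{4}\cdot 2^n$ bit operations, and the theorem stating that step \ref{second} uses at most $2\cdot 2^n$ bit operations. Adding these gives
\begin{equation*}
\frac{p^2+7p-1}{4}\cdot 2^n + 2\cdot 2^n \;=\; \frac{p^2+7p-1+8}{4}\cdot 2^n \;=\; \frac{p^2+7p+7}{4}\cdot 2^n,
\end{equation*}
which is exactly the claimed bound.

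Second, I would briefly note that the various integer additions performed to keep track of the exponents $i$ and $j$ contribute only $O(n)$ additional bit operations (as in \Tref{thm:comp_GaCh} for the Games-Chan Algorithm), and are either absorbed into the stated $\frac{p^2+7p+7}{4}\cdot 2^n$ bound or listed separately as the $+2n$ term appearing in the summary theorem of Section~\ref{sec:3power_m_2power_n}. No other source of operations enters, because the two steps \ref{first} and \ref{second} together exhaust the algorithm for this period.

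Since the corollary is essentially a one-line consequence of the two theorems above it, there is no genuine obstacle to the proof; the only thing to be careful about is to make sure the two factorization steps do not double-count any operations. In particular, the sequence $(\bfE+1)^{2^n}\bfs$ constructed at the start of step~\ref{first} and the sequence $(\bfE^{p-1}+\cdots+\bfE+1)^{2^n}\bfs$ constructed at the start of step~\ref{second} are independent computations on the original input $\bfs$, so their bit-operation counts add without overlap, justifying the clean sum above.
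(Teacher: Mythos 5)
Your proof is correct and is exactly the paper's (implicit) argument: the corollary follows by adding the bound $\frac{p^2+7p-1}{4}\cdot 2^n$ for step (A) to the bound $2\cdot 2^n$ for step (B), giving $\frac{p^2+7p+7}{4}\cdot 2^n$. Your additional remarks about the $O(n)$ integer additions and the absence of double-counting are consistent with how the paper treats these points.
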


\begin{corollary}
$~$
\begin{enumerate}
\item If $p=5$, the algorithm requires $16\frac{3}{4} \cdot 2^n=3.35N$ bit operations,
\item if $p=13$, the algorithm requires $66\frac{3}{4} \cdot 2^n=5.14N$ bit operations,
\item if $p=29$, the algorithm requires $262\frac{3}{4} \cdot 2^n=9.1N$ bit operations.
\end{enumerate}
\end{corollary}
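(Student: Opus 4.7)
The plan is to observe that this corollary is a direct numerical specialization of the preceding corollary, which asserts that for a binary cyclic sequence of length $N = p \cdot 2^n$ with $p = 4\ell+1$ prime and $2$ a generator modulo $p$, the algorithm uses $\frac{p^2+7p+7}{4}\cdot 2^n$ bit operations. So the proof reduces to (a) checking that each of $p \in \{5,13,29\}$ satisfies the hypotheses, and (b) substituting each value and rewriting the bound in terms of $N$.

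First I would verify the hypotheses. Each of $5 = 4\cdot 1 +1$, $13 = 4\cdot 3 + 1$, $29 = 4\cdot 7 + 1$ is of the required form $4\ell+1$, so it remains to check that $2$ is a primitive root in each case. For $p = 5$, the successive powers $2,4,3,1$ show $\mathrm{ord}_5(2) = 4 = \phi(5)$. For $p = 13$, a short computation gives $2^6 \equiv -1 \pmod{13}$, so $\mathrm{ord}_{13}(2) = 12 = \phi(13)$. For $p = 29$, since $\phi(29) = 28 = 2^2\cdot 7$, it suffices to check that $2^{14} \not\equiv 1$ and $2^4 \not\equiv 1 \pmod{29}$; the computation $2^{14} = 2^{10}\cdot 2^4 \equiv 9 \cdot 16 \equiv -1 \pmod{29}$ confirms that $\mathrm{ord}_{29}(2) = 28 = \phi(29)$.

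Next I would apply the preceding corollary by substituting each value of $p$ into $\frac{p^2+7p+7}{4}$. For $p = 5$ this yields $\frac{25 + 35 + 7}{4} = \frac{67}{4} = 16\tfrac{3}{4}$; for $p = 13$ it yields $\frac{169 + 91 + 7}{4} = \frac{267}{4} = 66\tfrac{3}{4}$; for $p = 29$ it yields $\frac{841 + 203 + 7}{4} = \frac{1051}{4} = 262\tfrac{3}{4}$. Finally, since $N = p\cdot 2^n$, the coefficient of $2^n$ translated to a multiple of $N$ is simply $\frac{1}{p}$ times the coefficient above; this gives $\frac{67}{20} = 3.35$, $\frac{267}{52} \approx 5.14$, and $\frac{1051}{116} \approx 9.06 \approx 9.1$ respectively, matching each claim.

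There is no real obstacle here: the whole content lies in the previous corollary, and the only thing to be careful with is the primitive-root check for $p=29$, which is a two-line computation. I would simply present these verifications and substitutions in sequence.
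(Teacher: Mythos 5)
Your proposal is correct and matches the paper's (implicit) argument exactly: the corollary is obtained by substituting $p=5,13,29$ into the bound $\frac{p^2+7p+7}{4}\cdot 2^n$ of the preceding corollary and dividing by $p$ to express it as a multiple of $N$, and your arithmetic ($\frac{67}{4}$, $\frac{267}{4}$, $\frac{1051}{4}$, with $\frac{1051}{116}\approx 9.06$ rounded to $9.1$) checks out. Your additional verification that $2$ is a primitive root modulo each of $5$, $13$, and $29$ is a sensible completeness check of the hypotheses that the paper leaves tacit.
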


\end{document}